\documentclass[11pt]{article}

\usepackage[T1]{fontenc}
\usepackage[margin=1in]{geometry}
\usepackage{amsmath,amssymb,amsthm}
\usepackage[hidelinks]{hyperref}

\newtheorem{theorem}{Theorem}[section]
\newtheorem{lemma}[theorem]{Lemma}
\newtheorem{corollary}[theorem]{Corollary}

\newcommand{\R}{\mathbb R}
\newcommand{\E}{\mathbb E}
\newcommand{\CH}{\mathcal C_H}
\newcommand{\CI}{\mathcal C_\infty}
\newcommand{\Reg}{\operatorname{Reg}}
\newcommand{\normD}[1]{\left\lVert #1\right\rVert_D}
\newcommand{\normDual}[1]{\left\lVert #1\right\rVert_{D,*}}
\DeclareMathOperator*{\argmin}{arg\,min}

\title{Optimal Regret for Online Storage Control\\
via Cumulative Policies}
\author{Kamiar Asgari \and Michael J. Neely}
\date{}

\begin{document}

\maketitle

\begin{abstract}
We study online control of a scalar storage system with adversarial
nonnegative arrivals, known retention coefficient, and convex costs
depending on both state and action. Each action must respect current
resource availability and is chosen before the current arrival and
cost function are revealed. For the existing simplex disturbance-action
policy class, we give an exact reparameterization by cumulative
allocation fractions and a decay-weighted projected subgradient update.
The resulting regret bound is independent of policy memory length.
For fixed retention coefficient and cost constants, the controller
achieves $O(\sqrt T)$ regret against the best fixed infinite-memory
policy in this class, using $O(\log T)$ memory and arithmetic operations
per round and one cost-subgradient query. A storage-specific block
construction gives a matching lower bound against every causal feasible
controller, including randomized controllers. Writing
$\tau=(1-\alpha)^{-1}$, the minimax expected regret is
$\Theta(\sqrt T\min\{T,\tau\}^{3/2})$ for every finite-memory
simplex policy class and its infinite-memory extension, when
$\alpha\in[1/2,1)$, $T\ge4$, and the positive cost constants are fixed.
This identifies the joint horizon and retention-time dependence for
these policy benchmarks.
\end{abstract}

\section{Introduction}

Storage constraints couple online decisions across time. A withdrawal
reduces the resource available for future decisions, while storage
decay limits how long past decisions affect the system. When arrivals
and costs are unknown in advance, a controller must learn a useful
allocation rule while maintaining feasibility at every round.

The benchmark is the best fixed policy in a prescribed feasible
class, evaluated on its own state trajectory under the same inputs.
This trajectory comparison is standard in adversarial online
control~\cite{agarwal2019online}; it accounts for the effect of a
policy's past actions on its present costs. Here we use the simplex
disturbance-action control (SDAC) class introduced for storage by
Asgari and Neely~\cite{asgari2026sdac}. For fixed model parameters,
their entropic update gives $O(\sqrt{T\log(H+1)})$ regret for memory
$H$ and $O(\sqrt{T\log\log T})$ for the infinite-memory extension.
Their independent-sign lower bound leaves a gap in retention-time
dependence. We address these two gaps for the same policy benchmarks.

Our first contribution is a regret bound uniform in the policy memory
length $H$ (Theorem~\ref{thm:upper}). Cumulative allocation fractions
give a bijection with the SDAC parameters. In these coordinates, a
weighted Euclidean geometry controls both surrogate-cost optimization
and the tracking error caused by changing policies. The total weight
is $\alpha\sum_{j=0}^{T-2}\alpha^j$, independently of $H$.
For fixed model parameters, the resulting bound is $O(\sqrt T)$,
removing the factor $\sqrt{\log(H+1)}$ while preserving the comparator
class and feasibility at every round. A standard weighted isotonic
projection implements the update in $O(H)$ operations.
Geometric truncation then gives $O(\sqrt T)$ regret against the
infinite-memory class with $O(\log T)$ memory and arithmetic per round
for fixed $\alpha$, excluding the cost-subgradient query
(Corollary~\ref{cor:infinite}).

Our second contribution is a storage-specific lower bound that
matches the upper bound in both horizon and retention time
(Theorem~\ref{thm:minimax}). Let $\tau=(1-\alpha)^{-1}$.
For every finite-memory SDAC class with $H\ge1$ and its
infinite-memory extension, the minimax expected regret has order
\[
\sqrt T\min\{T,\tau\}^{3/2}
=
\min\{T^2,\tau^{3/2}\sqrt T\}
\]
when $\alpha\in[1/2,1)$, $T\ge4$, and $L_x,L_u>0$ are fixed.
The construction uses only binary arrivals, affine state costs, and
two comparator policies, and applies to every causal feasible
controller, including randomized controllers. It identifies both the
$\tau^{3/2}\sqrt T$ regime and the $T^2$ regime when retention time
is at least the horizon.

These statements concern the known scalar dynamics, full cost feedback,
and absence of active capacity saturation specified in
Section~\ref{sec:model}. The comparator is fixed within the SDAC class;
the minimax infimum allows all causal feasible controllers. The
logarithmic implementation guarantee fixes $\alpha$, whereas the joint
minimax theorem permits $\alpha$ to vary and uses an exact
$T-1$-memory representation for the infinite-memory benchmark.

\subsection{Related work}

The closest predecessor, Asgari and Neely~\cite{asgari2026sdac},
already provides the feasible SDAC class, online clipping, inclusion
of fixed-fraction policies, geometric truncation, and the
independent-sign lower bound. We retain these ingredients. The advances
are the cumulative-coordinate bound uniform in $H$ and the lower
bound establishing the joint dependence on $T$ and $\tau$.
In particular, optimal $\sqrt T$ dependence for fixed finite memory
was already established there.

Agarwal et al.~\cite{agarwal2019online} develop disturbance-action
policies for linear systems with adversarial disturbances and convex
costs, obtaining regret guarantees relative to strongly stable linear
controllers through convex surrogate losses. This provides the broader
algorithmic setting for SDAC. Faster horizon rates are possible with
additional cost structure: Foster and
Simchowitz~\cite{foster2020logarithmic} obtain logarithmic regret for
known quadratic state and control costs under adversarial disturbances.
Our minimax statement instead ranges over adversarial convex costs
with the stated Lipschitz and subgradient bounds, and its lower bound
already holds for affine costs.

Online convex optimization with memory formalizes the dependence of
current losses on earlier decisions. Anava, Hazan, and
Mannor~\cite{anava2015memory} study finite memory and policy regret.
Kumar, Dean, and Kleinberg~\cite{kumar2023unbounded} introduce an
effective-memory quantity, prove matching worst-case bounds for their
general framework, and derive log-free $O(\sqrt T)$ regret for online
linear control. Their analysis also uses weighted norms to exploit
decaying influence. Thus log-free regret and decay-sensitive geometry
already have precedents. Here the policy length $H$ and physical
retention time $\tau$ play different roles: our bound is uniform in
$H$ but necessarily grows with $\tau$. The lower bound realizes this
dependence within the resource-feasible storage model itself, rather
than over an unrestricted family of losses with memory.

Hard constraints have also been studied in online control.
Li, Das, and Li~\cite{li2021onlineConstrained} enforce affine state
and action constraints through disturbance-action policies and
constraint tightening, assuming the existence of a strictly safe
linear controller.
The inequality $u_t\le\alpha x_t$ here is itself affine in the joint
state--action pair. The useful additional structure is resource
nonnegativity: the SDAC parameter restrictions guarantee fixed-policy
feasibility, and clipping enforces availability along the online
trajectory without an interior safety margin. Liu, Yang, and
Ying~\cite{liu2023constraints} give cumulative-violation bounds for
adversarial constraints and anytime guarantees for static constraints,
including zero violation for affine constraints via projection. Golowich
et al.~\cite{golowich2024population} obtain regret guarantees for
population dynamics on the simplex. Their simplex describes the
population state, whereas our simplex describes allocation fractions
assigned to arrivals of different ages.

Computational dependence on stability is a separate issue from
retention dependence in regret. Brahmbhatt
et al.~\cite{brahmbhatt2026spectral} approximate diagonalizably stable
linear policies using spectral filters, with amortized runtime
dependence polylogarithmic in the inverse stability margin. Our
cumulative representation is exact for the SDAC class, and the
implementation retains an explicit arrival buffer. Its memory choice
scales as $O(\min\{T,1+\tau\log T\})$, becoming $O(\log T)$ when
$\alpha$ is fixed. The contribution here is the sharp storage regret
bound and a simple feasible implementation, rather than an improvement
in general control's runtime dependence on stability.

Storage control also has stochastic and competitive formulations.
Qin et al.~\cite{qin2016storage} use Lyapunov optimization for
capacity- and rate-constrained storage, with feasibility and long-run
suboptimality guarantees under stochastic inputs. Kim
et al.~\cite{kim2016storage} analyze online energy-storage management
competitively, including networked storage and decay. These works
address storage constraints under different benchmarks and information
patterns. Our performance criterion is additive regret relative to a
fixed feasible policy's complete trajectory; our model has no active
capacity saturation. Thus adversarial storage control and decay are
established settings, while the result here identifies the minimax
rate for the specified SDAC benchmark and feedback model.

For energy-harvesting communication, Shaviv and
\"Ozg\"ur~\cite{shaviv2016universally} give near-optimal fixed-fraction
power policies for independent and identically distributed arrivals.
Arafa et al.~\cite{arafa2018online} extend fixed-fraction
analysis to general concave utilities. Such policies motivate an
important subclass of our infinite-memory comparator. Asgari and
Neely~\cite{asgari2020bregman} study online convex optimization with
energy-harvesting constraints and a regret--battery-capacity tradeoff
under independent and identically distributed arrivals. Their actions
can use the current arrival, and their fixed-action benchmark is
constrained by mean energy supply. Here actions precede the current
arrival, both arrivals and costs can be adversarial, and regret compares
feasible policy trajectories with stock-dependent costs.

The optimization steps use established methods: projected online
subgradient descent~\cite{zinkevich2003online}, quadratic mirror
descent~\cite{beck2003mirror}, and weighted isotonic regression via
the pool-adjacent-violators algorithm~\cite{best1990isotonic}.
Our analysis chooses their coordinates and weights to control both
policy optimization and the storage tracking error.

\section{Model and feedback}
\label{sec:model}

We write $\R_+=[0,\infty)$, use column vectors, and set
$\langle v,z\rangle=v^\top z$. The symbols $\mathbf0$ and
$\mathbf1$ denote the all-zero and all-one vectors of the required
dimension. All logarithms are natural, and $0^0=1$ in the
geometric sequences below.

Fix a horizon $T\ge2$ and a known retention coefficient
$\alpha\in(0,1)$. Let $L_x,L_u>0$ be known bounds on the cost
variation, specified below. The system starts empty and evolves according to
\begin{equation}
\label{eq:dynamics}
x_1=0,
\qquad
x_{t+1}=\alpha x_t-u_t+w_t,
\qquad t=1,\ldots,T.
\end{equation}
Here $x_t$ is the resource available at the beginning of round $t$,
$u_t$ is the withdrawal, and $w_t\in[0,1]$ is the arrival.
Every action must satisfy
\begin{equation}
\label{eq:feasibility}
0\le u_t\le\alpha x_t.
\end{equation}

At round $t$, the controller observes $x_t$ and chooses $u_t$
using only previously revealed information and its internal
randomness. The current arrival $w_t$ and the entire cost function
$c_t$ are then revealed. The controller incurs $c_t(x_t,u_t)$,
and the state evolves according to \eqref{eq:dynamics}.

Define
\begin{equation}
\label{eq:constants}
\tau=\frac{1}{1-\alpha},
\qquad
\tau_T=\sum_{j=0}^{T-2}\alpha^j,
\qquad
C=L_x+L_u,
\qquad
K=L_x+\alpha L_u.
\end{equation}
Feasibility and $0\le w_t\le1$ give
$0\le x_{t+1}\le\alpha x_t+1$. Induction from $x_1=0$ therefore gives
\[
0\le x_t\le\sum_{j=0}^{t-2}\alpha^j
=\frac{1-\alpha^{t-1}}{1-\alpha}\le\tau.
\]
Also, $1\le\tau_T\le\min\{T-1,\tau\}$ and $K\le C$.
Thus all feasible state--action pairs lie in
\[
\mathcal D_\alpha
=
\{(x,u):0\le x\le\tau,\ 0\le u\le\alpha x\}.
\]
The dynamics have no active capacity saturation. A physical capacity
at least $\tau$ suffices to ensure this property.

Each cost $c_t:\R_+^2\to\R$ is convex and satisfies
\begin{equation}
\label{eq:lipschitz}
|c_t(x,u)-c_t(x',u')|
\le L_x|x-x'|+L_u|u-u'|
\end{equation}
on $\mathcal D_\alpha$, for fixed constants $L_x,L_u>0$.
After observing $c_t$, the controller can obtain, at every point
$(x,u)\in\mathcal D_\alpha$, including its boundary, a subgradient
\begin{equation}
\label{eq:oracle}
(\zeta_t^x,\zeta_t^u)\in\partial c_t(x,u),
\qquad
|\zeta_t^x|\le L_x,
\qquad
|\zeta_t^u|\le L_u.
\end{equation}
Here a subgradient $g$ of a convex function $f$ at $z$ satisfies
$f(z')\ge f(z)+\langle g,z'-z\rangle$ for every point $z'$ in
its domain. In particular, the queried cost subgradient satisfies
\begin{equation}
\label{eq:cost-support}
c_t(x',u')\ge c_t(x,u)
+\zeta_t^x(x'-x)+\zeta_t^u(u'-u),
\qquad (x',u')\in\mathcal D_\alpha.
\end{equation}
The bounded query in \eqref{eq:oracle} is an explicit assumption
also at boundary points of $\mathcal D_\alpha$.

The online cumulative cost is
\[
J_T^{\mathrm{on}}=\sum_{t=1}^T c_t(x_t,u_t).
\]
All comparator policies are evaluated from the same empty initial
state and on the same arrival and cost sequence.
The upper bounds below hold for every admissible realized sequence.
The lower bounds use sequences fixed before play.

We set $w_s=0$ for $s\le0$ and interpret empty sums as zero.
When $T=1$, every feasible controller and comparator has
$(x_1,u_1)=(0,0)$, so regret is zero.

\section{Cumulative policy classes}
\label{sec:policies}

For an integer $H\ge1$, define
\begin{equation}
\label{eq:policy-set}
\CH
=
\{y\in\R^H:0\le y_1\le\cdots\le y_H\le1\}.
\end{equation}
Set $y_0=0$ and extend a finite policy by $y_i=y_H$ for $i>H$.
The action generated by a fixed policy $y\in\CH$ is
\begin{equation}
\label{eq:policy-action}
u_t(y)=\sum_{i=1}^H\alpha^i(y_i-y_{i-1})w_{t-i}.
\end{equation}
The same parameter vector is used throughout the fixed-policy
trajectory.

The coordinate $y_i$ is the cumulative allocation fraction through
age $i$. In particular, an arrival $w_s$ contributes
$\alpha^i(y_i-y_{i-1})w_s$ to the withdrawal at round $s+i$.
No further withdrawals are assigned to that arrival after age $H$.

The simplex disturbance-action class of
\cite{asgari2026sdac} is
\[
\mathcal M_H
=
\left\{m\in\R_+^H:\sum_{i=1}^H m_i\le1\right\}.
\]
The maps
\begin{equation}
\label{eq:bijection}
y_i=\sum_{j=1}^i m_j,
\qquad
m_i=y_i-y_{i-1}
\end{equation}
give a bijection between $\mathcal M_H$ and $\CH$.
For clarity, the SDAC policy associated with $m\in\mathcal M_H$ is
\[
u_t(m)=\sum_{i=1}^H\alpha^i m_iw_{t-i},
\qquad
x_1(m)=0,\qquad
x_{t+1}(m)=\alpha x_t(m)-u_t(m)+w_t.
\]
If $m\in\mathcal M_H$, its partial sums satisfy
$0\le y_1\le\cdots\le y_H\le1$. Conversely, $y\in\CH$ gives
$m_i\ge0$ and $\sum_i m_i=y_H\le1$. These maps are inverses.
Substituting $m_i=y_i-y_{i-1}$ makes the two action formulas
identical. The common initial condition and dynamics then make
their states identical by induction, so comparator costs agree exactly.

\begin{lemma}[Trajectory formula and feasibility]
\label{lem:trajectory}
For every $y\in\CH$,
\begin{align}
\label{eq:policy-state}
x_t(y)
&=\sum_{j=1}^{t-1}\alpha^{j-1}(1-y_{j-1})w_{t-j},\\
\label{eq:policy-residual}
\alpha x_t(y)-u_t(y)
&=\sum_{j=1}^{t-1}\alpha^j(1-y_j)w_{t-j}\ge0.
\end{align}
Consequently, the policy is feasible, and both $x_t(y)$ and
$u_t(y)$ are affine functions of $y$.
\end{lemma}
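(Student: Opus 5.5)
We prove \eqref{eq:policy-state} by induction on $t$ and then read off \eqref{eq:policy-residual} by direct algebra. Throughout we use the extension $y_i=y_H$ for $i>H$, which makes the action formula \eqref{eq:policy-action} equivalent to
\[
u_t(y)=\sum_{i=1}^{t-1}\alpha^i(y_i-y_{i-1})w_{t-i}.
\]
Two facts justify this. Terms with $i>H$ vanish because $y_i-y_{i-1}=0$ there. Terms with $i\ge t$ vanish because $w_s=0$ for $s\le0$.

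\textbf{Base case.} For $t=1$ the state formula is an empty sum, matching $x_1=0$.

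\textbf{Residual identity.} Assume the state formula holds at $t$. Multiplying by $\alpha$ gives
\[
\alpha x_t(y)=\sum_{j=1}^{t-1}\alpha^{j}(1-y_{j-1})w_{t-j}.
\]
Subtract $u_t(y)$ term by term at the same index $j$. The coefficient of $\alpha^j w_{t-j}$ becomes
\[
(1-y_{j-1})-(y_j-y_{j-1})=1-y_j.
\]
This yields the equality in \eqref{eq:policy-residual} at time $t$.

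\textbf{Inductive step.} Adding $w_t$ to the residual gives
\[
x_{t+1}(y)=w_t+\sum_{j=1}^{t-1}\alpha^j(1-y_j)w_{t-j}.
\]
Reindex with $j'=j+1$. The sum becomes $\sum_{j'=2}^{t}\alpha^{j'-1}(1-y_{j'-1})w_{t+1-j'}$. The term $w_t$ is the $j'=1$ term, since $\alpha^0(1-y_0)=1$. This is exactly \eqref{eq:policy-state} at $t+1$, closing the induction.

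\textbf{Nonnegativity and feasibility.} The residual is nonnegative because $y_j\le y_H\le1$, $\alpha>0$, and $w_s\ge0$. Nonnegativity of $u_t(y)$ follows from monotonicity $y_i\ge y_{i-1}$ together with $w\ge0$. Hence $0\le u_t(y)\le\alpha x_t(y)$, which is \eqref{eq:feasibility}.

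\textbf{Affinity in $y$.} Both formulas are finite sums of terms that are affine in the coordinates $y_j$, with coefficients depending only on the fixed arrivals. Hence $x_t(y)$ and $u_t(y)$ are affine in $y$.

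\textbf{Main care point.} The only delicate part is index bookkeeping. The formula for $x_t$ involves $y_{j-1}$ for $j$ up to $t-1$, possibly beyond $H$, and this is where the convention $y_i=y_H$ is needed. It is also where the contributions with $i\ge t$ are removed via $w_s=0$ for $s\le 0$. Once the action is written with the sum truncated at $t-1$, the induction is routine.
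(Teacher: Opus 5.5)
Your proof is correct and follows the paper's argument. The paper also inducts on the state formula, subtracts the action from $\alpha x_t(y)$ to get the residual identity, adds $w_t$ to close the step, and reads off nonnegativity from $y_j\le1$ and the monotone increments. Rewriting the action sum up to $t-1$ via the extension $y_i=y_H$ and $w_s=0$ for $s\le0$ simply makes explicit the index bookkeeping that the paper leaves implicit.
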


\begin{proof}
The state formula holds at $t=1$. Assuming it holds at $t$,
subtracting \eqref{eq:policy-action} from $\alpha x_t(y)$ gives
\eqref{eq:policy-residual}. Adding $w_t$ gives the state formula
at $t+1$. This proves the formulas by induction.
The increments $y_i-y_{i-1}$ are nonnegative, so $u_t(y)\ge0$.
Since $y_j\le1$, the residual is nonnegative.
Affineness follows directly from the formulas.
\end{proof}

Two policies used later have $y_i^{(0)}=0$ and $y_i^{(1)}=1$
for $i\ge1$; both have zeroth coordinate zero. They respectively
make no withdrawals and
withdraw each arrival's surviving amount one round later:
\begin{equation}
\label{eq:endpoints}
\begin{aligned}
x_t(y^{(0)})
&=\sum_{j=1}^{t-1}\alpha^{j-1}w_{t-j},
&
u_t(y^{(0)})&=0,\\
x_t(y^{(1)})&=w_{t-1},
&
u_t(y^{(1)})&=\alpha w_{t-1}.
\end{aligned}
\end{equation}

The infinite-memory class is
\begin{equation}
\label{eq:infinite-class}
\CI
=
\{(y_i)_{i\ge0}:y_0=0,\ 0\le y_1\le y_2\le\cdots\le1\},
\end{equation}
with actions
\[
u_t(y)=\sum_{i=1}^{t-1}\alpha^i(y_i-y_{i-1})w_{t-i}.
\]
Lemma~\ref{lem:trajectory} remains valid because every sum at a
fixed time is finite. The correspondence \eqref{eq:bijection}
also gives a bijection with
\[
\mathcal M_\infty
=\{(m_i)_{i\ge1}:m_i\ge0,\ \sum_{i\ge1}m_i\le1\},
\qquad
u_t(m)=\sum_{i=1}^{t-1}\alpha^im_iw_{t-i}.
\]
Indeed, for $y\in\CI$ the nonnegative increments satisfy
$\sum_{i=1}^n m_i=y_n$, so their total mass is
$\lim_{n\to\infty}y_n\le1$. Conversely, partial sums of any
$m\in\mathcal M_\infty$ give a member of $\CI$.
The action and state identities hold at every round because
only finitely many coordinates are used.

Fixed-fraction policies are widely used in energy-harvesting
control~\cite{shaviv2016universally,arafa2018online}.
As in the SDAC representation~\cite{asgari2026sdac}, the
infinite-memory class includes every feasible fixed-fraction
policy $u_t=kx_t$, $k\in[0,\alpha]$. Indeed, take
\[
y_i=1-\left(1-\frac{k}{\alpha}\right)^i,
\qquad i\ge1.
\]
The state formula becomes
\[
x_t(y)=\sum_{j=1}^{t-1}(\alpha-k)^{j-1}w_{t-j},
\]
and the action formula gives $u_t(y)=kx_t(y)$.

For $\mathcal C=\CH$ or $\mathcal C=\CI$, define
\begin{equation}
\label{eq:regret}
J_T(y)=\sum_{t=1}^T c_t(x_t(y),u_t(y)),
\qquad
\Reg_T(\mathcal C)
=
J_T^{\mathrm{on}}-\inf_{y\in\mathcal C}J_T(y).
\end{equation}
Only $y_1,\ldots,y_{T-1}$ can affect costs through round $T$.
Thus $\CI$ and $\mathcal C_{T-1}$ have identical comparator
costs on this horizon. Likewise, a finite memory larger than
$T-1$ can be replaced by $T-1$ without changing the benchmark.

\section{Weighted cumulative-policy control}
\label{sec:algorithm}

We now take $1\le H\le T-1$.

\subsection{Arrival coefficients and weighted geometry}

Define
\begin{equation}
\label{eq:features}
\begin{aligned}
r_t&=\sum_{j=1}^{t-1}\alpha^{j-1}w_{t-j},\\
a_{t,i}&=\alpha^i w_{t-i-1},
&&1\le i<H,\\
b_{t,i}&=\alpha^i w_{t-i}-\alpha^{i+1}w_{t-i-1},
&&1\le i<H,\\
a_{t,H}&=\sum_{j=H+1}^{t-1}\alpha^{j-1}w_{t-j},\\
b_{t,H}&=\alpha^H w_{t-H}.
\end{aligned}
\end{equation}
These quantities depend only on arrivals observed before round $t$.
Collect the coefficients in vectors $a_t,b_t\in\R^H$.

Choose weights
\begin{equation}
\label{eq:weights}
d_i=\alpha^i\quad(1\le i<H),
\qquad
d_H=\sum_{j=H}^{T-1}\alpha^j,
\qquad
W_T=\sum_{i=1}^H d_i=\alpha\tau_T.
\end{equation}
Let $D=\operatorname{diag}(d_1,\ldots,d_H)$ and define
\[
\normD{z}^2=\sum_{i=1}^H d_i z_i^2,
\qquad
\normDual{g}^2=\sum_{i=1}^H\frac{g_i^2}{d_i}.
\]
For these norms, ordinary Cauchy--Schwarz applied to
$(g_i/\sqrt{d_i})_i$ and $(\sqrt{d_i}z_i)_i$ gives
\begin{equation}
\label{eq:weighted-duality}
|\langle g,z\rangle|
\le
\left(\sum_i\frac{g_i^2}{d_i}\right)^{1/2}
\left(\sum_i d_i z_i^2\right)^{1/2}
=\normDual{g}\normD{z}.
\end{equation}
All weights are positive. The last coordinate receives the entire
remaining decay weight because $y_H$ also determines the stock
remaining at ages beyond $H$. The total weight $W_T$ does not
depend on $H$.

For a fixed policy, define the surrogate cost
\begin{equation}
\label{eq:surrogate}
\ell_t(y)=c_t(x_t(y),u_t(y)).
\end{equation}
This evaluates round $t$ as if the same policy $y$ had been used
from the beginning.

\begin{lemma}[Coefficient and subgradient bounds]
\label{lem:features}
For every $t\le T$ and $y\in\CH$,
\begin{equation}
\label{eq:affine}
x_t(y)=r_t-a_t^\top y,
\qquad
u_t(y)=b_t^\top y.
\end{equation}
Moreover,
\begin{equation}
\label{eq:coefficient-bounds}
0\le a_{t,i}\le d_i,
\qquad
|b_{t,i}|\le d_i,
\qquad
\normDual{a_t},\normDual{b_t}\le\sqrt{W_T}.
\end{equation}
Consequently, for $y,z\in\CH$,
\begin{equation}
\label{eq:sensitivity}
\begin{aligned}
|x_t(y)-x_t(z)|&\le\sqrt{W_T}\normD{y-z},\\
|u_t(y)-u_t(z)|&\le\sqrt{W_T}\normD{y-z}.
\end{aligned}
\end{equation}
The surrogate $\ell_t$ is convex. A cost-subgradient query at
$(x_t(y),u_t(y))$ gives
\begin{equation}
\label{eq:gradient}
g_t(y)=-\zeta_t^x a_t+\zeta_t^u b_t
\in\partial\ell_t(y),
\qquad
\normDual{g_t(y)}\le C\sqrt{W_T}.
\end{equation}
Moreover,
\begin{equation}
\label{eq:surrogate-lipschitz}
|\ell_t(y)-\ell_t(z)|
\le C\sqrt{W_T}\normD{y-z},
\qquad y,z\in\CH.
\end{equation}
\end{lemma}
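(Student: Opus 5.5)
We derive the affine representation from Lemma~\ref{lem:trajectory}; the coefficient bounds, sensitivity, subgradient and Lipschitz bounds then follow from it together with \eqref{eq:weighted-duality}.

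\emph{Step 1 (affine forms).} In \eqref{eq:policy-state}, write $1-y_{j-1}$ and use $y_0=0$ and $y_i=y_H$ for $i>H$. The constant part is $r_t$. The $y$-dependent part is $-\sum_{j=2}^{t-1}\alpha^{j-1}y_{j-1}w_{t-j}$. Reindex with $i=j-1$. For $i<H$ the coefficient of $y_i$ is $\alpha^i w_{t-i-1}=a_{t,i}$. All terms with $j-1\ge H$ collapse onto $y_H$, with coefficient $\sum_{j=H+1}^{t-1}\alpha^{j-1}w_{t-j}=a_{t,H}$. This gives $x_t(y)=r_t-a_t^\top y$.

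For the action, expand $\sum_{i=1}^H\alpha^i(y_i-y_{i-1})w_{t-i}$ by summation by parts. The coefficient of $y_i$ for $i<H$ is $\alpha^i w_{t-i}-\alpha^{i+1}w_{t-i-1}=b_{t,i}$, and that of $y_H$ is $\alpha^H w_{t-H}$. The boundary term involving $y_0$ vanishes. Since $w_s=0$ for $s\le 0$, these formulas also hold for small $t$.

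\emph{Step 2 (coefficient bounds).} Since $w\in[0,1]$, we have $0\le a_{t,i}\le\alpha^i=d_i$ for $i<H$. Also $0\le a_{t,H}\le\sum_{j=H}^{t-2}\alpha^{j}\le\sum_{j=H}^{T-1}\alpha^j=d_H$, because $t\le T$.

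For $b$ with $i<H$, the value $b_{t,i}$ is a difference of two numbers lying in $[0,\alpha^i]$ and $[0,\alpha^{i+1}]$, so $|b_{t,i}|\le\alpha^i=d_i$. Also $0\le b_{t,H}\le\alpha^H\le d_H$, using $H\le T-1$.

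Hence $\normDual{a_t}^2=\sum_i a_{t,i}^2/d_i\le\sum_i d_i=W_T$, and likewise for $b_t$. The identity $W_T=\alpha\tau_T$ is a direct geometric sum. The sensitivity bounds \eqref{eq:sensitivity} follow from \eqref{eq:affine} and \eqref{eq:weighted-duality} applied to $y-z$.

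\emph{Step 3 (convexity and subgradient).} The function $\ell_t$ is the convex function $c_t$ composed with an affine map, so it is convex. Fix $y,z\in\CH$. By Lemma~\ref{lem:trajectory}, both state--action pairs lie in $\mathcal D_\alpha$. We may therefore apply \eqref{eq:cost-support} at $(x_t(y),u_t(y))$ with $(x',u')=(x_t(z),u_t(z))$. Substituting $x_t(z)-x_t(y)=-a_t^\top(z-y)$ and $u_t(z)-u_t(y)=b_t^\top(z-y)$ yields
\[
\ell_t(z)\ge\ell_t(y)+\langle -\zeta_t^xa_t+\zeta_t^ub_t,\,z-y\rangle .
\]
So $g_t(y)$ is a subgradient relative to the domain $\CH$, which is the sense defined in the paper. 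The triangle inequality and \eqref{eq:oracle} give $\normDual{g_t(y)}\le L_x\sqrt{W_T}+L_u\sqrt{W_T}=C\sqrt{W_T}$.

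\emph{Step 4 (Lipschitz).} Combine \eqref{eq:lipschitz} on $\mathcal D_\alpha$ with \eqref{eq:sensitivity}: $|\ell_t(y)-\ell_t(z)|\le L_x\sqrt{W_T}\normD{y-z}+L_u\sqrt{W_T}\normD{y-z}$.

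The main obstacle is the bookkeeping in Step 1 and the bound on $a_{t,H}$. One must verify that folding the tail ages into coordinate $H$ produces exactly $d_H$ as the bound, with the index range fitting within $T-1$. One must also handle the edge cases $t\le H$ and $H=1$, where some $w$ indices are nonpositive and the sums are empty.
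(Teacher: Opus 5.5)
Your proposal is correct and follows essentially the same route as the paper. Like the paper, you regroup the state formula with the tail ages folded into $y_H$, expand the action by successive differences, bound each coefficient by $d_i$ to get the dual-norm bounds, and then derive sensitivity, the subgradient, and the Lipschitz bound from weighted Cauchy--Schwarz, the cost-support inequality, and cost Lipschitzness. Your explicit check that both state--action pairs lie in $\mathcal D_\alpha$, so that \eqref{eq:cost-support} and \eqref{eq:lipschitz} apply, is a small point the paper leaves implicit.
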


\begin{proof}
Grouping the terms in \eqref{eq:policy-state}, with $y_i=y_H$
for $i>H$, gives
\[
x_t(y)
=
r_t-\sum_{i=1}^{H-1}\alpha^i w_{t-i-1}y_i
-y_H\sum_{j=H+1}^{t-1}\alpha^{j-1}w_{t-j}.
\]
Expanding successive differences in \eqref{eq:policy-action}
gives
\[
u_t(y)
=
\sum_{i=1}^{H-1}
(\alpha^i w_{t-i}-\alpha^{i+1}w_{t-i-1})y_i
+\alpha^H w_{t-H}y_H.
\]
These are \eqref{eq:affine}.

For $i<H$, both $w_{t-i}$ and $\alpha w_{t-i-1}$ lie in
$[0,1]$, giving the coordinate bounds. For the last coordinate,
\[
0\le a_{t,H}\le\sum_{j=H}^{T-2}\alpha^j\le d_H,
\qquad
|b_{t,H}|\le\alpha^H\le d_H.
\]
Thus, for $v=a_t$ or $v=b_t$,
\[
\normDual{v}^2
=
\sum_i\frac{v_i^2}{d_i}
\le\sum_i d_i=W_T.
\]
Weighted Cauchy--Schwarz proves \eqref{eq:sensitivity}.

Write $F_t(y)=(x_t(y),u_t(y))$. By \eqref{eq:affine},
$F_t(\lambda y+(1-\lambda)z)=\lambda F_t(y)+(1-\lambda)F_t(z)$.
For $\lambda\in[0,1]$, cost convexity therefore gives
\[
\ell_t(\lambda y+(1-\lambda)z)
\le\lambda\ell_t(y)+(1-\lambda)\ell_t(z).
\]
Applying \eqref{eq:cost-support} at $F_t(y)$ yields, for $z\in\CH$,
\begin{align*}
\ell_t(z)-\ell_t(y)
&\ge \zeta_t^x\bigl(x_t(z)-x_t(y)\bigr)
+\zeta_t^u\bigl(u_t(z)-u_t(y)\bigr)\\
&=(-\zeta_t^x a_t+\zeta_t^u b_t)^\top(z-y),
\end{align*}
which proves the subgradient formula. Furthermore,
\[
\normDual{g_t(y)}
\le|\zeta_t^x|\normDual{a_t}
+|\zeta_t^u|\normDual{b_t}
\le(L_x+L_u)\sqrt{W_T}.
\]
Finally, \eqref{eq:lipschitz} and \eqref{eq:sensitivity} give
\[
|\ell_t(y)-\ell_t(z)|
\le L_x|x_t(y)-x_t(z)|+L_u|u_t(y)-u_t(z)|
\le C\sqrt{W_T}\normD{y-z}.
\]
\end{proof}

\subsection{Controller}

We apply projected online subgradient
descent~\cite{zinkevich2003online} in the $D$ norm, equivalently
mirror descent with quadratic regularizer
$y^\top Dy/2$~\cite{beck2003mirror}.
Fix a step size $\eta>0$ and initialize
$y_1=\frac12\mathbf1$. At round $t$, compute
\begin{equation}
\label{eq:controller-action}
\widehat x_t=r_t-a_t^\top y_t,
\qquad
v_t=b_t^\top y_t,
\qquad
u_t=\min\{v_t,\alpha x_t\}.
\end{equation}
The pair $(\widehat x_t,v_t)$ is the feasible state--action
pair of the current fixed policy. The actual state $x_t$ was
generated by earlier policies and may differ from $\widehat x_t$.
The minimum enforces current resource availability.

After observing $w_t$ and $c_t$, query a cost subgradient at
$(\widehat x_t,v_t)$, set
\[
g_t=-\zeta_t^x a_t+\zeta_t^u b_t,
\]
and update
\begin{equation}
\label{eq:update}
y_{t+1}
=
\argmin_{y\in\CH}
\left\{
\eta g_t^\top y+\frac12\normD{y-y_t}^2
\right\}.
\end{equation}
The set $\CH$ is nonempty, compact, and convex, and $D$ is
positive definite. Thus the update has a unique minimizer.
For $z_t=y_t-\eta D^{-1}g_t$, completing the square gives
\[
\eta g_t^\top y+\frac12\normD{y-y_t}^2
=\frac12\normD{y-z_t}^2
+\eta g_t^\top y_t-\frac{\eta^2}{2}\normDual{g_t}^2.
\]
The last two terms do not depend on $y$, so the update is
exactly the weighted projection of $z_t$ onto $\CH$.

The controller uses one cost-subgradient query per round.
Full cost feedback and \eqref{eq:oracle} make the query at the
hypothetical pair available.
Lemma~\ref{lem:trajectory} gives $v_t\ge0$, so
\eqref{eq:controller-action} ensures feasibility. All information
used to choose the action is available before the current
arrival and cost are revealed.

The coefficients can be maintained using the last $H$ arrivals
and the recursions
\begin{equation}
\label{eq:recursions}
r_{t+1}=\alpha r_t+w_t,
\qquad
\rho_{t+1}=\alpha\rho_t+\alpha^H w_{t-H},
\end{equation}
where $r_1=\rho_1=0$ and $\rho_t=a_{t,H}$.
The old arrival $w_{t-H}$ is used before inserting $w_t$ into
the arrival buffer.

\subsection{Weighted projection}

The projection is a bounded weighted isotonic regression problem.
We use the classical pool-adjacent-violators
algorithm~\cite{best1990isotonic}, followed by clipping to $[0,1]$.
The following lemma records the bounded version and its linear
arithmetic complexity for completeness.

\begin{lemma}[Bounded weighted isotonic projection]
\label{lem:projection}
Let $d_i>0$ and $z\in\R^H$. Start with one block per coordinate.
Whenever a left block has a larger weighted mean than the
next block, merge them. A block $B$ has weight and mean
\[
d_B=\sum_{i\in B}d_i,
\qquad
\mu_B=\frac{\sum_{i\in B}d_i z_i}{d_B}.
\]
At termination, let $p_i$ equal the mean of its block, and set
\[
\widehat p_i=\min\{1,\max\{0,p_i\}\}.
\]
Then $\widehat p$ is the unique minimizer of
\begin{equation}
\label{eq:projection}
\min_{0\le y_1\le\cdots\le y_H\le1}
\frac12\sum_{i=1}^H d_i(y_i-z_i)^2.
\end{equation}
A stack implementation uses $O(H)$ arithmetic operations and memory.
\end{lemma}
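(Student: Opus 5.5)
The proof has three parts, taken in order. First we treat the unbounded weighted isotonic problem. Then we show that clipping its solution to $[0,1]$ solves the bounded problem. Finally we count operations.

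\textbf{Step 1 (unbounded problem).} Let $p$ be the pool-adjacent-violators output for $\min_{y_1\le\cdots\le y_H}\frac12\sum_i d_i(y_i-z_i)^2$. The objective is strictly convex because every $d_i>0$, and the feasible set is a closed convex cone. So a minimizer exists and is unique, and it is characterized by the variational inequality $\sum_i d_i(p_i-z_i)(y_i-p_i)\ge0$ for every monotone $y$. To verify this for the PAV output, we use the standard KKT certificate. Merging preserves two invariants.
\begin{enumerate}
\item Within each final block $B$, the partial sums $\sum_{i\in B,\,i\le k}d_i(z_i-\mu_B)$ are nonnegative for every prefix. This holds because a block is formed only by merging a left part whose mean is at least the mean of the right part.
\item At termination the block means are nondecreasing.
\end{enumerate}
Write $y=\sum_k c_k\mathbf 1_{\{i\ge k\}}$ with $c_k\ge0$ for $k\ge2$ and $c_1$ free. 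Summation by parts then gives $\sum_i d_i(p_i-z_i)y_i\ge0$, with equality for $y=p$, since each block has zero total residual. Hence $p$ is the unique unbounded minimizer. The merge order does not matter, because the result is the unique minimizer.

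\textbf{Step 2 (clipping).} We show that $\widehat p=\min\{1,\max\{0,p\}\}$ satisfies the variational inequality for $\CH$: for all $y\in\CH$,
\[
\sum_i d_i(\widehat p_i-z_i)(y_i-\widehat p_i)\ge0 .
\]
Split the indices into three sets: $I_0=\{p_i<0\}$, $I_1=\{0\le p_i\le1\}$ and $I_2=\{p_i>1\}$. Because $p$ is monotone, these are consecutive intervals, and each is a union of whole blocks since $p$ is constant on blocks. Decompose $y-\widehat p$ into its parts on the three intervals.

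On $I_1$ we have $\widehat p=p$. Using the same block summation by parts, restricted to the interval, the residual sums over whole blocks vanish and the prefix sums are nonnegative. This gives a nonnegative contribution. The one delicate point is the leftmost step of $y$ inside $I_1$; it is handled because $y\ge0$ and $y$ is monotone, so the contribution is dominated as in the unbounded case.

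On $I_0$ we have $\widehat p_i=0$ and $y_i-\widehat p_i=y_i\ge0$. The prefix sums from the right satisfy $\sum_{i\in B,\,i\ge k}d_i(0-z_i)\ge d(0-\mu_B)\cdot(\text{weight})\ge0$, because $\mu_B<0$ and the suffix residuals of a PAV block are nonpositive. So this contribution is nonnegative. The set $I_2$ is symmetric, with $\widehat p_i=1$ and $y_i-1\le0$.

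A cleaner alternative, which I would try first, is the known composition result: the projection onto the intersection of the monotone cone with the box equals the clipped monotone projection, because clipping is a monotone map that commutes with pooling. One would prove this via the variational inequality above. Uniqueness again follows from strict convexity.

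\textbf{Step 3 (complexity).} Keep a stack of blocks, each storing $(d_B,\sum d_iz_i,\text{length})$. Push the coordinates one at a time, and while the top two blocks violate monotonicity, pop and merge them. Each merge reduces the number of blocks by one, so there are at most $H-1$ merges, and each costs $O(1)$. The total is $O(H)$ arithmetic, followed by an $O(H)$ pass to expand the blocks and clip. Memory is $O(H)$.

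The main obstacle is Step 2, specifically the boundary interaction at the transitions between $I_0$, $I_1$ and $I_2$. Verifying the sign of the summation-by-parts terms there is the part that needs care.
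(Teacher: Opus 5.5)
Your proof is correct. Steps 1 and 3 follow the paper: the prefix-residual invariant of pooled blocks, summation by parts against the increments of $y$ (the paper phrases this with multipliers $\lambda_k=\sum_{i\le k}d_i(z_i-p_i)\ge0$ that vanish at block boundaries), and a stack with at most $H-1$ merges. For the invariant you should still write out the case of a prefix reaching into the right block, which uses $d_P\le d_B$ and $\mu_B<\mu$.

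Step 2 is where you genuinely differ, and your region-by-region argument works. The sets $I_0$, $I_1$, $I_2$ are unions of whole blocks, and the sum $\sum_i d_i(\widehat p_i-z_i)(y_i-\widehat p_i)$ splits coordinatewise. So the three pieces are independent, and the ``boundary interaction'' you flag never arises. On $I_1$, the coefficient of the first value of $y$ is the total residual of whole blocks, which is zero; this, not $y\ge0$, disposes of your ``leftmost step.'' On the initial segment $I_0$, use nonpositive suffix residuals and $\mu_B<0$, with $y_1\ge0$ as the base coefficient. Take the suffix sums over all of $I_0$, not just one block: each whole block adds $-\mu_Bd_B>0$. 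On the final segment $I_2$, symmetrically use nonnegative prefix residuals and $\mu_B>1$, with $1-y_H\ge0$ as the base coefficient.

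The paper avoids all of this bookkeeping by treating unbounded optimality of $p$ as a black box. Since $q\mapsto q-\min\{1,\max\{0,q\}\}$ is nondecreasing, $y+p-\widehat p$ is monotone, so the unbounded inequality gives $\langle D(p-z),y-\widehat p\rangle\ge0$. Adding the coordinatewise clipping inequality $\sum_i d_i(\widehat p_i-p_i)(y_i-\widehat p_i)\ge0$ yields the bounded inequality. That two-line argument is precisely a proof of the ``clipping commutes with pooling'' fact you cite as an alternative. It is shorter, holds for any unbounded isotonic minimizer and any common interval bounds, and never reopens the block certificate. Your version is longer but shows explicitly why blocks with negative mean land at $0$ and blocks with mean above $1$ land at $1$.
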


\begin{proof}
First omit the bounds zero and one. Every block produced by the
merging procedure satisfies
\begin{equation}
\label{eq:prefix}
\sum_{i\in P}d_i(z_i-\mu_B)\ge0
\end{equation}
for each prefix $P$ of the block, with equality for the entire
block. This is immediate for a singleton.

Suppose adjacent blocks $A,B$ with $\mu_A>\mu_B$ are merged
to form a block of mean $\mu$. For a prefix contained in $A$,
the new residual sum is nonnegative because $\mu\le\mu_A$.
For a prefix consisting of $A$ followed by a prefix $P$ of $B$,
the residual sum is at least
\[
d_A(\mu_A-\mu)+d_P(\mu_B-\mu)
\ge d_A(\mu_A-\mu)+d_B(\mu_B-\mu)=0.
\]
Here $d_P\le d_B$ and $\mu_B-\mu<0$.
Thus \eqref{eq:prefix} is preserved.

At termination, define
\[
\lambda_0=\lambda_H=0,
\qquad
\lambda_k=\sum_{i=1}^k d_i(z_i-p_i),
\quad 1\le k<H.
\]
The prefix property gives $\lambda_k\ge0$. At a block boundary
$\lambda_k=0$; within a block $p_k=p_{k+1}$. Hence
\[
d_i(p_i-z_i)+\lambda_i-\lambda_{i-1}=0,
\qquad
\lambda_i(p_i-p_{i+1})=0.
\]
These identities directly prove optimality. For any nondecreasing
vector $q$, summation by parts and the complementary identities give
\[
\langle D(p-z),q-p\rangle
=\sum_{i=1}^{H-1}\lambda_i
\bigl((q_{i+1}-q_i)-(p_{i+1}-p_i)\bigr)
=\sum_{i=1}^{H-1}\lambda_i(q_{i+1}-q_i)\ge0.
\]
For $F(q)=\frac12\normD{q-z}^2$, expanding the square gives
\[
F(q)-F(p)
=\langle D(p-z),q-p\rangle+\frac12\normD{q-p}^2\ge0.
\]
Hence $p$ minimizes the unbounded isotonic problem.

To impose the bounds, take any $y\in\CH$.
The map $q\mapsto q-\min\{1,\max\{0,q\}\}$ is nondecreasing,
so $y+p-\widehat p$ is nondecreasing. Optimality of $p$ yields
\[
\langle D(p-z),y-\widehat p\rangle\ge0.
\]
Also, coordinatewise clipping gives
\[
(\widehat p_i-p_i)(y_i-\widehat p_i)\ge0.
\]
Adding these inequalities gives
\[
\langle D(\widehat p-z),y-\widehat p\rangle\ge0,
\]
which proves optimality in \eqref{eq:projection}.
Strict convexity gives uniqueness.

A stack inserts each coordinate once and performs at most
$H-1$ merges. Its arithmetic and memory costs are therefore
$O(H)$.
\end{proof}

Combining this projection with \eqref{eq:features} and
\eqref{eq:recursions}, the controller uses $O(H)$ arithmetic
operations and memory per round, apart from the cost-subgradient
query.

\section{Regret guarantees}
\label{sec:upper}

\begin{theorem}[Memory-independent regret bound]
\label{thm:upper}
Under the assumptions of Section~\ref{sec:model}, the controller
\eqref{eq:controller-action}--\eqref{eq:update} is causal and
feasible. For every $\eta>0$ and every admissible realized sequence,
\begin{equation}
\label{eq:upper-eta}
\Reg_T(\CH)
\le
\frac{W_T}{8\eta}
+\eta T W_T\left(\frac{C^2}{2}+CK\tau_T\right).
\end{equation}
With
\begin{equation}
\label{eq:eta}
\eta=\frac{1}{2\sqrt{T(C^2+2CK\tau_T)}},
\end{equation}
the bound becomes
\begin{equation}
\label{eq:upper}
\Reg_T(\CH)
\le U_T
:=
\frac{\alpha\tau_T}{2}
\sqrt{T(C^2+2CK\tau_T)}.
\end{equation}
In particular,
\begin{equation}
\label{eq:upper-simple}
\Reg_T(\CH)
\le
\frac{\sqrt3}{2}\alpha(L_x+L_u)\tau_T^{3/2}\sqrt T
\le
\frac{\sqrt3\alpha(L_x+L_u)}
{2(1-\alpha)^{3/2}}\sqrt T.
\end{equation}
None of these bounds depends on $H$.
\end{theorem}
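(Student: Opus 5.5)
The proof splits the regret into an online-optimization term on the surrogates $\ell_t$ and a tracking term that compares the actual trajectory with the hypothetical fixed-policy pair $(\widehat x_t,v_t)$. Causality and feasibility were already checked after \eqref{eq:controller-action}. For any comparator $y\in\CH$, note that $\ell_t(y_t)=c_t(\widehat x_t,v_t)$ by Lemma~\ref{lem:features}. Then
\[
J_T^{\mathrm{on}}-J_T(y)=\sum_{t=1}^T\bigl[c_t(x_t,u_t)-c_t(\widehat x_t,v_t)\bigr]+\sum_{t=1}^T\bigl[\ell_t(y_t)-\ell_t(y)\bigr].
\]

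\textbf{Second sum.} Apply the standard projected-subgradient analysis in the $D$ norm. The update is the weighted projection of $z_t=y_t-\eta D^{-1}g_t$. Convexity of $\ell_t$ and $g_t\in\partial\ell_t(y_t)$ from \eqref{eq:gradient}, combined with the projection inequality, give
\[
\eta\,g_t^\top(y_t-y)\le\tfrac12\normD{y_t-y}^2-\tfrac12\normD{y_{t+1}-y}^2+\tfrac{\eta^2}{2}\normDual{g_t}^2 .
\]
Telescoping uses $\normD{y_1-y}^2\le\sum_i d_i/4=W_T/4$, which holds because $y_1=\frac12\mathbf1$ and $y\in[0,1]^H$. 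Together with $\normDual{g_t}^2\le C^2W_T$, this yields $W_T/(8\eta)+\eta TC^2W_T/2$.

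\textbf{Tracking term.} Set $e_t=x_t-\widehat x_t$. I claim that
\[
|u_t-v_t|\le\alpha|e_t|
\qquad\text{and}\qquad
c_t(x_t,u_t)-c_t(\widehat x_t,v_t)\le K|e_t| .
\]
If no clipping occurs, then $u_t=v_t$. If clipping occurs, then $\alpha x_t<v_t\le\alpha\widehat x_t$ by \eqref{eq:policy-residual}, so $e_t<0$ and $0<v_t-u_t\le\alpha|e_t|$. The cost bound then follows from \eqref{eq:lipschitz} with $K=L_x+\alpha L_u$.

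For the recursion, write $x_{t+1}(y_t)=\alpha\widehat x_t-v_t+w_t$, which holds by Lemma~\ref{lem:trajectory}. Then
\[
x_{t+1}-x_{t+1}(y_t)=\alpha e_t-(u_t-v_t).
\]
In both cases this quantity lies between $\alpha e_t$ and $0$, so its magnitude is at most $\alpha|e_t|$. Next, $x_{t+1}(y_t)-\widehat x_{t+1}=a_{t+1}^\top(y_{t+1}-y_t)$ by \eqref{eq:affine}. Its magnitude is at most $\sqrt{W_T}\normD{y_{t+1}-y_t}\le\sqrt{W_T}\,\eta\normDual{g_t}\le\eta CW_T$. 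Here I use that the projection is nonexpansive in $\normD{\cdot}$ and $y_t\in\CH$.

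Since $e_1=0$, unrolling gives $|e_t|\le\eta CW_T\sum_{s=1}^{t-1}\alpha^{t-1-s}\le\eta CW_T\tau_T$, because $t-1\le T-1$. Summing over $t$ bounds the tracking term by $\eta TW_TCK\tau_T$. Adding the two sums and taking the infimum over $y$ gives \eqref{eq:upper-eta}.

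\textbf{Remaining steps.} Substituting \eqref{eq:eta} balances the two terms. This gives
\[
W_T\cdot\tfrac{2}{8}\sqrt{T(C^2+2CK\tau_T)}+\tfrac{W_T}{4}\sqrt{T(C^2+2CK\tau_T)},
\]
which is $U_T$ with $W_T=\alpha\tau_T$. For \eqref{eq:upper-simple}, use $K\le C$ and $\tau_T\ge1$ to get $C^2+2CK\tau_T\le3C^2\tau_T$, and then $\tau_T\le\tau$. $H$-independence is immediate because only $W_T$ and $\tau_T$ appear.

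The step I expect to be the main obstacle is the tracking recursion. It needs the one-sided sign argument showing that clipping contracts the error rather than enlarging it. It also requires charging each policy change through $a_{t+1}$ only once, with geometric decay, so that the tracking term scales as $W_T\tau_T$ rather than picking up an extra $H$ factor.
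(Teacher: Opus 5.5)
Your proof is correct and follows the paper's argument. It uses the same split into surrogate regret (projected subgradient descent in the $D$ norm, with $\normD{y_1-y}^2\le W_T/4$) and a tracking term controlled by $|u_t-v_t|\le\alpha|e_t|$ and the contraction $|e_{t+1}|\le\alpha|e_t|+\sqrt{W_T}\normD{y_{t+1}-y_t}$; the differences are cosmetic and give the same constants: a case analysis in place of the 1-Lipschitz positive-part form $x_{t+1}=[\alpha x_t-v_t]_{+}+w_t$, projection nonexpansiveness instead of the first-order optimality condition for the movement bound, and a uniform per-round bound on $|e_t|$ rather than exchanging the order of summation.
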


The proof separates optimization of the fixed-policy surrogate
costs from tracking the state generated by changing policies.
This follows the surrogate-and-movement approach used in online
learning with memory and disturbance-action
control~\cite{anava2015memory,agarwal2019online,kumar2023unbounded}.
The storage-specific step is to bound both terms in the same
cumulative-coordinate geometry, uniformly in $H$.

\begin{lemma}[Optimization and policy movement]
\label{lem:optimization}
The update \eqref{eq:update} satisfies
\begin{equation}
\label{eq:movement}
\normD{y_{t+1}-y_t}
\le\eta\normDual{g_t}
\le\eta C\sqrt{W_T}.
\end{equation}
For every $y^*\in\CH$,
\begin{equation}
\label{eq:surrogate-regret}
\sum_{t=1}^T
\bigl(\ell_t(y_t)-\ell_t(y^*)\bigr)
\le
\frac{W_T}{8\eta}
+\frac{\eta TC^2W_T}{2}.
\end{equation}
\end{lemma}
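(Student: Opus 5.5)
The plan is the standard projected-subgradient analysis in the $D$ geometry, with two ingredients: the variational characterization of the weighted projection, and a bound on the $D$-diameter of $\CH$.

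\emph{Movement bound.} The update \eqref{eq:update} is the $D$-weighted projection of $z_t=y_t-\eta D^{-1}g_t$ onto $\CH$. Such projections are nonexpansive in $\normD{\cdot}$, and $y_t=\Pi(y_t)$ because $y_t\in\CH$. Hence
\[
\normD{y_{t+1}-y_t}\le\normD{z_t-y_t}=\eta\normD{D^{-1}g_t}=\eta\normDual{g_t}.
\]
Nonexpansiveness follows from the variational inequality $\langle D(y_{t+1}-z_t),y-y_{t+1}\rangle\ge0$ for $y\in\CH$. To obtain it, one can either take the first-order optimality condition of the strictly convex problem, or use the inequality established at the end of the proof of Lemma~\ref{lem:projection}. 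The factor $C\sqrt{W_T}$ then comes from \eqref{eq:gradient}, since $g_t=g_t(y_t)$. The query at $(\widehat x_t,v_t)=(x_t(y_t),u_t(y_t))$ is exactly the query used in Lemma~\ref{lem:features}.

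\emph{Surrogate regret.} Since $g_t\in\partial\ell_t(y_t)$ by Lemma~\ref{lem:features}, convexity gives $\ell_t(y_t)-\ell_t(y^*)\le g_t^\top(y_t-y^*)$. Apply the variational inequality with $y=y^*$ and use the three-point identity in the $D$ norm:
\[
\eta g_t^\top(y_t-y^*)\le\tfrac12\normD{y_t-y^*}^2-\tfrac12\normD{y_{t+1}-y^*}^2-\tfrac12\normD{y_{t+1}-y_t}^2+\eta g_t^\top(y_t-y_{t+1}).
\]
By weighted duality \eqref{eq:weighted-duality} and Young's inequality, the last two terms together are at most $\tfrac{\eta^2}{2}\normDual{g_t}^2$. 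Summing over $t$ telescopes the distance terms. Dividing by $\eta$ then gives
\[
\sum_t\bigl(\ell_t(y_t)-\ell_t(y^*)\bigr)\le\frac{\normD{y_1-y^*}^2}{2\eta}+\frac{\eta}{2}\sum_t\normDual{g_t}^2,
\]
and the second term is at most $\eta TC^2W_T/2$.

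\emph{Initial distance.} This is the only place where care is needed to get the constant $W_T/8$. Since $y_1=\tfrac12\mathbf1$ and every coordinate $y^*_i\in[0,1]$, each $|y_{1,i}-y^*_i|\le\tfrac12$. Therefore $\normD{y_1-y^*}^2\le\tfrac14\sum_i d_i=W_T/4$, and dividing by $2\eta$ yields $W_T/(8\eta)$. This bound depends only on $W_T$, not on $H$. That is the point of the construction: $d_H$ absorbs the tail weight, so $\sum_i d_i=\alpha\tau_T$.

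I expect no real obstacle. The only subtle point is to state the projection's variational inequality in the $D$-inner product rather than the Euclidean one.
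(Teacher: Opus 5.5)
Your proof is correct and follows the paper's argument. You use the same $D$-weighted variational inequality for the update, the three-point identity combined with weighted Cauchy--Schwarz and Young's inequality, and the bound $\normD{y_1-y^*}^2\le W_T/4$. The only cosmetic difference is that you obtain the movement bound from nonexpansiveness of the $D$-projection, whereas the paper substitutes $y=y_t$ directly into the optimality condition; these are the same computation, since $y_t$ is its own projection.
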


\begin{proof}
For any $y\in\CH$, the segment
$y_{t+1}+s(y-y_{t+1})$, $0\le s\le1$, lies in $\CH$.
The right derivative at $s=0$ of the minimized objective is
nonnegative. Consequently,
\begin{equation}
\label{eq:update-optimality}
\langle \eta g_t+D(y_{t+1}-y_t),y-y_{t+1}\rangle
\ge0,
\qquad y\in\CH.
\end{equation}
With $\Delta_t=y_{t+1}-y_t$, taking $y=y_t$ gives
\[
\normD{\Delta_t}^2
\le-\eta g_t^\top\Delta_t
\le\eta\normDual{g_t}\normD{\Delta_t}.
\]
If $\Delta_t=0$, the movement bound is immediate; otherwise,
division by $\normD{\Delta_t}$ proves its first inequality.
Lemma~\ref{lem:features} gives the second.

Taking $y=y^*$ in \eqref{eq:update-optimality} and expanding
squared norms yields
\[
\eta g_t^\top(y_{t+1}-y^*)
\le\frac12\bigl(
\normD{y_t-y^*}^2-\normD{y_{t+1}-y^*}^2
-\normD{\Delta_t}^2\bigr).
\]
Also, weighted Cauchy--Schwarz and
$ab\le(a^2+b^2)/2$ imply
\[
\eta g_t^\top(y_t-y_{t+1})
\le\eta\normDual{g_t}\normD{\Delta_t}
\le\frac{\eta^2}{2}\normDual{g_t}^2
+\frac12\normD{\Delta_t}^2.
\]
Adding and dividing by $\eta$ gives
\[
g_t^\top(y_t-y^*)
\le
\frac{\normD{y_t-y^*}^2-\normD{y_{t+1}-y^*}^2}{2\eta}
+\frac{\eta}{2}\normDual{g_t}^2.
\]
Convexity bounds
$\ell_t(y_t)-\ell_t(y^*)$ by the left-hand side.
Summing over $t$, using
$\normDual{g_t}^2\le C^2W_T$, and observing that
\[
\normD{y_1-y^*}^2
=
\sum_i d_i\left(\frac12-y_i^*\right)^2
\le\frac{W_T}{4}
\]
proves \eqref{eq:surrogate-regret}.
\end{proof}

\begin{lemma}[Tracking a changing policy]
\label{lem:tracking}
For any sequence $y_t\in\CH$ used in
\eqref{eq:controller-action}, define
\[
e_t=|x_t-x_t(y_t)|.
\]
Then $e_1=0$ and
\begin{equation}
\label{eq:tracking-step}
e_{t+1}
\le\alpha e_t+\sqrt{W_T}\normD{y_{t+1}-y_t},
\qquad t<T.
\end{equation}
Consequently,
\begin{equation}
\label{eq:tracking-loss}
\sum_{t=1}^T
|c_t(x_t,u_t)-\ell_t(y_t)|
\le
K\sqrt{W_T}\tau_T
\sum_{t=1}^{T-1}\normD{y_{t+1}-y_t}.
\end{equation}
For the update \eqref{eq:update}, the right-hand side is at most
$\eta TCKW_T\tau_T$.
\end{lemma}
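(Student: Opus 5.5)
The plan is to compare, at each round, the actual pair $(x_t,u_t)$ with the fixed-policy pair $(\widehat x_t,v_t)=(x_t(y_t),u_t(y_t))$ from \eqref{eq:controller-action}, and to use that the clipping map is contractive with factor $\alpha$. First, $e_1=0$ because $x_1=x_1(y_1)=0$.

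\textbf{One-step recursion.} For $t<T$, split
\[
e_{t+1}\le|x_{t+1}-x_{t+1}(y_t)|+|x_{t+1}(y_t)-x_{t+1}(y_{t+1})|.
\]
The second term is at most $\sqrt{W_T}\normD{y_{t+1}-y_t}$ by \eqref{eq:sensitivity}, applied at round $t+1\le T$.

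For the first term, use \eqref{eq:dynamics} to write $x_{t+1}=\alpha x_t-\min\{v_t,\alpha x_t\}+w_t=\max\{\alpha x_t-v_t,0\}+w_t$. By Lemma~\ref{lem:trajectory}, $\alpha\widehat x_t-v_t\ge0$, so the fixed policy's next state is $x_{t+1}(y_t)=\alpha\widehat x_t-v_t+w_t=\max\{\alpha\widehat x_t-v_t,0\}+w_t$. The map $s\mapsto\max\{\alpha s-v_t,0\}$ is $\alpha$-Lipschitz. Hence the first term is at most $\alpha|x_t-\widehat x_t|=\alpha e_t$, which gives \eqref{eq:tracking-step}.

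\textbf{Per-round cost discrepancy.} By \eqref{eq:lipschitz},
\[
|c_t(x_t,u_t)-\ell_t(y_t)|\le L_xe_t+L_u|u_t-v_t|.
\]
Since $u_t=\min\{v_t,\alpha x_t\}$, we have $|u_t-v_t|=\max\{v_t-\alpha x_t,0\}$. Because $v_t\le\alpha\widehat x_t$, this is at most $\max\{\alpha(\widehat x_t-x_t),0\}\le\alpha e_t$. Hence the discrepancy is at most $(L_x+\alpha L_u)e_t=Ke_t$. This is where $K$ rather than $C$ enters.

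\textbf{Summation.} Unrolling \eqref{eq:tracking-step} from $e_1=0$ gives
\[
e_t\le\sqrt{W_T}\sum_{s=1}^{t-1}\alpha^{t-1-s}\normD{y_{s+1}-y_s}.
\]
Summing over $t\le T$ and exchanging the order of summation, the coefficient of $\normD{y_{s+1}-y_s}$ is $\sum_{t=s+1}^{T}\alpha^{t-1-s}\le\sum_{j=0}^{T-2}\alpha^j=\tau_T$. Multiplying by $K$ yields \eqref{eq:tracking-loss}.

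For the update \eqref{eq:update}, each of the at most $T-1\le T$ movement terms is bounded by $\eta C\sqrt{W_T}$ via \eqref{eq:movement} in Lemma~\ref{lem:optimization}. The right-hand side is therefore at most $\eta TCKW_T\tau_T$.

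The only delicate step is the first term of the recursion. The two clipping cases could be handled separately, but writing the post-action residual as $\max\{\alpha s-v_t,0\}$ makes the $\alpha$-contraction immediate. It also relies on the nonnegative residual of the fixed policy from Lemma~\ref{lem:trajectory}, which lets the fixed-policy residual be written in the same form.
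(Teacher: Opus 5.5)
Your proposal is correct and follows the paper's proof essentially step for step. It uses the same triangle-inequality split through $x_{t+1}(y_t)$, the positive-part form of both transitions via the fixed policy's nonnegative residual, the bound $0\le v_t-u_t\le\alpha e_t$ that produces $K$, and the unrolling with exchanged sums that yields $\tau_T$; the only difference is that you phrase the contraction through the $\alpha$-Lipschitz map $s\mapsto\max\{\alpha s-v_t,0\}$ instead of the $1$-Lipschitz positive part.
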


\begin{proof}
Write $v_t=u_t(y_t)$ and $[q]_+=\max\{q,0\}$.
The actual and current fixed-policy transitions are
\[
x_{t+1}=[\alpha x_t-v_t]_++w_t,
\qquad
x_{t+1}(y_t)=[\alpha x_t(y_t)-v_t]_++w_t.
\]
The second identity uses fixed-policy feasibility.
The positive-part map is $1$-Lipschitz, so
\[
|x_{t+1}-x_{t+1}(y_t)|\le\alpha e_t.
\]
Combining this with \eqref{eq:sensitivity} proves
\eqref{eq:tracking-step}.

Set $\delta_s=\normD{y_{s+1}-y_s}$. Induction from $e_1=0$
gives $e_t\le\sqrt{W_T}\sum_{s=1}^{t-1}\alpha^{t-1-s}\delta_s$.
Changing the order of the finite sums yields
\begin{align*}
\sum_{t=1}^T e_t
&\le\sqrt{W_T}\sum_{s=1}^{T-1}\delta_s
       \sum_{t=s+1}^T\alpha^{t-1-s}\\
&\le\sqrt{W_T}\tau_T\sum_{s=1}^{T-1}\delta_s.
\end{align*}
Furthermore,
\[
0\le v_t-u_t
=[v_t-\alpha x_t]_+
\le[\alpha x_t(y_t)-\alpha x_t]_+
\le\alpha e_t.
\]
Hence \eqref{eq:lipschitz} implies
\[
|c_t(x_t,u_t)-\ell_t(y_t)|
\le(L_x+\alpha L_u)e_t=Ke_t.
\]
Summing proves \eqref{eq:tracking-loss}.
Finally, \eqref{eq:movement} gives
\[
K\sqrt{W_T}\tau_T
\sum_{t=1}^{T-1}\normD{y_{t+1}-y_t}
\le\eta TCKW_T\tau_T.
\]
\end{proof}

\begin{proof}[Proof of Theorem~\ref{thm:upper}]
Causality and feasibility were established in
Section~\ref{sec:algorithm}. For any fixed $y^*\in\CH$,
\begin{align*}
J_T^{\mathrm{on}}-J_T(y^*)
&=
\sum_{t=1}^T
\bigl(\ell_t(y_t)-\ell_t(y^*)\bigr)
+
\sum_{t=1}^T
\bigl(c_t(x_t,u_t)-\ell_t(y_t)\bigr)\\
&\le
\frac{W_T}{8\eta}
+\frac{\eta TC^2W_T}{2}
+\eta TCKW_T\tau_T.
\end{align*}
Here we used Lemmas~\ref{lem:optimization} and
\ref{lem:tracking}.
A minimizing comparator exists because $\CH$ is compact and
$J_T$ is continuous. Choosing it proves \eqref{eq:upper-eta}.

The step size \eqref{eq:eta} balances the inverse-step-size
term with the terms proportional to the step size, yielding
\eqref{eq:upper}. Finally, $K\le C$ and $\tau_T\ge1$ imply
\[
C^2+2CK\tau_T\le3C^2\tau_T.
\]
Substituting this and $\tau_T\le\tau$ proves
\eqref{eq:upper-simple}.
\end{proof}

\subsection{Infinite-memory comparison}

We use the geometric-tail comparison for SDAC
policies~\cite{asgari2026sdac}, stated below in cumulative coordinates
with the finite-horizon factor $\tau_T$.
For $y\in\CI$, define its truncation by
\begin{equation}
\label{eq:truncation}
y_i^{[H]}=y_{\min\{i,H\}},
\qquad
\varepsilon_{H,T}(y)
=
\sum_{i=H+1}^{T-1}\alpha^i(y_i-y_{i-1}).
\end{equation}
The truncated policy preserves allocations through age $H$
and omits all later allocations.

\begin{lemma}[Truncation cost]
\label{lem:truncation}
For every $y\in\CI$,
\begin{equation}
\label{eq:truncation-cost}
|J_T(y^{[H]})-J_T(y)|
\le
T(L_u+L_x\tau_T)\varepsilon_{H,T}(y),
\end{equation}
where
\begin{equation}
\label{eq:tail}
0\le\varepsilon_{H,T}(y)
\le\alpha^{H+1}(1-y_H)\le\alpha^{H+1}.
\end{equation}
If $H\ge T-1$, the two trajectories agree through round $T$.
\end{lemma}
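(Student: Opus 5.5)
We compare the two fixed-policy trajectories directly through the trajectory formula of Lemma~\ref{lem:trajectory}, which holds for $\CI$ because only finitely many coordinates enter at each round. Write $z=y^{[H]}$. Then $z_i=y_i$ for $i\le H$ and $z_i=y_H$ for $i>H$. By \eqref{eq:policy-state},
\[
x_t(z)-x_t(y)=\sum_{j=1}^{t-1}\alpha^{j-1}(y_{j-1}-z_{j-1})w_{t-j}
=\sum_{j=H+2}^{t-1}\alpha^{j-1}(y_{j-1}-y_H)w_{t-j}.
\]
Each term is nonnegative, since $y$ is nondecreasing. The action difference is
\[
u_t(y)-u_t(z)=\sum_{i=H+1}^{t-1}\alpha^i(y_i-y_{i-1})w_{t-i}\ge0.
\]
Using $w\in[0,1]$, we get immediately $0\le u_t(y)-u_t(z)\le\varepsilon_{H,T}(y)$ for every $t\le T$, because the indices $i\le t-1\le T-1$ lie in the range defining $\varepsilon_{H,T}$.

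For the state difference, write $y_{j-1}-y_H=\sum_{i=H+1}^{j-1}(y_i-y_{i-1})$ and swap sums:
\[
0\le x_t(z)-x_t(y)\le\sum_{i=H+1}^{t-2}(y_i-y_{i-1})\sum_{j=i+1}^{t-1}\alpha^{j-1}
\le\sum_{i=H+1}^{T-1}\alpha^{i}(y_i-y_{i-1})\sum_{k=0}^{T-2}\alpha^{k}.
\]
This equals $\tau_T\,\varepsilon_{H,T}(y)$, so $|x_t(z)-x_t(y)|\le\tau_T\varepsilon_{H,T}(y)$. Applying \eqref{eq:lipschitz} round by round, since both pairs lie in $\mathcal D_\alpha$ by feasibility, gives
\[
|c_t(x_t(z),u_t(z))-c_t(x_t(y),u_t(y))|\le(L_u+L_x\tau_T)\varepsilon_{H,T}(y).
\]
Summing over $t\le T$ yields \eqref{eq:truncation-cost}. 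The only care needed here is the index bookkeeping when bounding the inner geometric sum $\sum_{j>i}\alpha^{j-1}\le\alpha^i\tau_T$ uniformly in $t\le T$. I expect this to be the main (mild) obstacle, since the interchange must produce exactly $\tau_T$ rather than $\tau$.

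For \eqref{eq:tail}, nonnegativity is clear from monotonicity. For the upper bound, use $\alpha^i\le\alpha^{H+1}$ for $i\ge H+1$, so $\varepsilon_{H,T}(y)\le\alpha^{H+1}\sum_{i=H+1}^{T-1}(y_i-y_{i-1})$. The sum telescopes to $y_{T-1}-y_H\le1-y_H\le1$. If $H\ge T-1$, the sums defining both differences are empty for every $t\le T$. Hence the actions and states coincide through round $T$, consistent with $\varepsilon_{H,T}=0$.
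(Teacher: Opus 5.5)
Your proof is correct and follows essentially the same structure as the paper's. In both, the action gap lies in $[0,\varepsilon_{H,T}(y)]$, the state gap is at most $\tau_T\varepsilon_{H,T}(y)$, and Lipschitz continuity plus monotone telescoping finish the argument. The only difference is how the state bound is obtained. The paper uses the recursion $z_{t+1}=\alpha z_t+q_t$, where the state gap $z_t$ is driven by the action gap $q_t$. This gives $z_t=\sum_{s<t}\alpha^{t-1-s}q_s\le\tau_T\varepsilon_{H,T}(y)$ directly, without your sum interchange over the closed-form trajectory.
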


\begin{proof}
Let
\[
q_t=u_t(y)-u_t(y^{[H]})
=
\sum_{i=H+1}^{t-1}\alpha^i(y_i-y_{i-1})w_{t-i}.
\]
Then $0\le q_t\le\varepsilon_{H,T}(y)$.
The state difference
$z_t=x_t(y^{[H]})-x_t(y)$ satisfies
\[
z_1=0,
\qquad
z_{t+1}=\alpha z_t+q_t.
\]
Thus
\[
0\le z_t
=\sum_{s=1}^{t-1}\alpha^{t-1-s}q_s
\le\tau_T\varepsilon_{H,T}(y).
\]
Applying \eqref{eq:lipschitz} and summing proves
\eqref{eq:truncation-cost}.

If $H<T-1$, monotonicity gives
\[
\varepsilon_{H,T}(y)
\le
\alpha^{H+1}
\sum_{i=H+1}^{T-1}(y_i-y_{i-1})
\le\alpha^{H+1}(1-y_H).
\]
If $H\ge T-1$, all omitted-action sums are empty, so
$q_t=z_t=0$ through round $T$.
\end{proof}

\begin{corollary}[Infinite-memory regret]
\label{cor:infinite}
With the step size \eqref{eq:eta}, for every $y\in\CI$,
\begin{equation}
\label{eq:individual-infinite}
J_T^{\mathrm{on}}-J_T(y)
\le
U_T+T(L_u+L_x\tau_T)\varepsilon_{H,T}(y).
\end{equation}
Consequently,
\begin{equation}
\label{eq:infinite-regret}
\Reg_T(\CI)
\le
\begin{cases}
U_T+(L_u+L_x\tau_T)\alpha^{H+1}T,
&1\le H<T-1,\\
U_T,
&H=T-1.
\end{cases}
\end{equation}
In particular, choose
\begin{equation}
\label{eq:log-memory}
H_T=
\min\left\{
T-1,\
\max\left\{
1,\
\left\lceil\frac{\log T}{\log(1/\alpha)}\right\rceil
\right\}
\right\}.
\end{equation}
Then
\begin{equation}
\label{eq:log-regret}
\Reg_T(\CI)
\le U_T+\alpha(L_u+L_x\tau_T).
\end{equation}
For fixed $\alpha,L_x,L_u$, this is $O(\sqrt T)$ regret with
$O(\log T)$ arithmetic operations and memory per round,
apart from the cost-subgradient query.
\end{corollary}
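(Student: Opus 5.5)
The plan is to split the regret against an infinite-memory comparator into a finite-memory regret term and a truncation term. Fix $y\in\CI$ and its truncation $y^{[H]}$ from \eqref{eq:truncation}. This truncation lies in $\CH$: its first $H$ coordinates are nondecreasing in $[0,1]$, and its extension by $y_H$ beyond age $H$ matches the convention of Section~\ref{sec:policies}. We then write
\[
J_T^{\mathrm{on}}-J_T(y)
=\bigl(J_T^{\mathrm{on}}-J_T(y^{[H]})\bigr)
+\bigl(J_T(y^{[H]})-J_T(y)\bigr).
\]
Theorem~\ref{thm:upper}, with the step size \eqref{eq:eta}, bounds the first bracket by $U_T$, since $y^{[H]}$ is a fixed member of $\CH$. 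Lemma~\ref{lem:truncation} bounds the second bracket by $T(L_u+L_x\tau_T)\varepsilon_{H,T}(y)$. Together these give \eqref{eq:individual-infinite}. Because $U_T$ does not depend on $H$, the same bound holds for every admissible memory $1\le H\le T-1$.

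Taking the supremum over $y$ gives \eqref{eq:infinite-regret}, after inserting $\varepsilon_{H,T}(y)\le\alpha^{H+1}$ from \eqref{eq:tail}. When $H=T-1$, Lemma~\ref{lem:truncation} says the two trajectories coincide, so the truncation term vanishes. The infimum in $\Reg_T(\CI)$ need not be attained, but this is harmless: the bound holds for every $y$, hence also for the infimum.

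For the choice $H_T$ in \eqref{eq:log-memory}, we split into cases.

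\emph{Case $H_T=T-1$.} The bound is $U_T$, which is smaller than \eqref{eq:log-regret}.

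\emph{Case $H_T<T-1$.} Here $H_T\ge \log T/\log(1/\alpha)$, except when the inner maximum returns $1$. Even then $1\ge\log T/\log(1/\alpha)$ whenever the ceiling was at most $1$, so in all cases $H_T\ge\log T/\log(1/\alpha)$. This gives $\alpha^{H_T}\le 1/T$, so $\alpha^{H_T+1}T\le\alpha$, which yields \eqref{eq:log-regret}.

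For fixed $\alpha,L_x,L_u$, the quantity $\tau_T\le\tau$ is bounded. Then $U_T=O(\sqrt T)$ by \eqref{eq:upper-simple}, and the additive term is $O(1)$. The per-round cost is $O(H_T)=O(\log T/\log(1/\alpha))=O(\log T)$, using the complexity statement after Lemma~\ref{lem:projection}.

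No step here is a real obstacle. The only points needing care are checking that $y^{[H]}\in\CH$ and handling the edge cases of the ceiling and minimum in $H_T$.
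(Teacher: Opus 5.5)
Your proposal is correct and follows essentially the same route as the paper: split through the truncation $y^{[H]}\in\CH$, bound the two pieces by Theorem~\ref{thm:upper} and Lemma~\ref{lem:truncation}, take the supremum using \eqref{eq:tail}, and for $H_T<T-1$ use $H_T\ge\log T/\log(1/\alpha)$ to get $T\alpha^{H_T}\le1$. Your ceiling edge case reduces to $\max\{1,\lceil q\rceil\}\ge\lceil q\rceil\ge q$. The $O(\sqrt T)$ claim for $U_T$ itself follows from the inequality $C^2+2CK\tau_T\le3C^2\tau_T$ used in proving \eqref{eq:upper-simple}.
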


\begin{proof}
Since $y^{[H]}\in\CH$, Theorem~\ref{thm:upper} and
Lemma~\ref{lem:truncation} give
\[
J_T^{\mathrm{on}}-J_T(y)
=
J_T^{\mathrm{on}}-J_T(y^{[H]})
+J_T(y^{[H]})-J_T(y)
\le
U_T+T(L_u+L_x\tau_T)\varepsilon_{H,T}(y).
\]
Taking the supremum over $y\in\CI$ proves
\eqref{eq:infinite-regret}.
If $H_T<T-1$, then $T\alpha^{H_T}\le1$.
If $H_T=T-1$, the comparison is exact.
These cases give \eqref{eq:log-regret}.
The complexity follows from $H_T=O(\log T)$ for fixed $\alpha$.
\end{proof}

\section{Lower bounds and the minimax rate}
\label{sec:lower}

Throughout this section, $\mathcal C$ denotes either $\CH$,
for any integer $H\ge1$, or $\CI$. Define
\begin{equation}
\label{eq:minimax}
\mathfrak R_T(\mathcal C)
=
\inf_{\mathcal A}
\sup_{(w_t,c_t)_{t=1}^T}
\E_{\mathcal A}[\Reg_T(\mathcal C)].
\end{equation}
The infimum is over all causal feasible controllers, including
randomized controllers. The supremum is over admissible
deterministic sequences fixed before the run. The expectation
is over the controller's internal randomness.

Both constructions below compare with the two policies in
\eqref{eq:endpoints}, which belong to every class under consideration.
We use the sharp Khintchine inequality
\cite{haagerup1981khintchine}: for independent uniform signs
$\xi_i\in\{-1,1\}$ and deterministic real coefficients $a_i$,
\begin{equation}
\label{eq:khintchine}
\E\left|\sum_i a_i\xi_i\right|
\ge
\frac1{\sqrt2}\left(\sum_i a_i^2\right)^{1/2}.
\end{equation}

The following elementary facts make the probability arguments explicit.

\begin{lemma}[Unrevealed signs and deterministic instances]
\label{lem:averaging}
Let $\mathcal F$ be a sigma-field, let $V$ be an integrable
$\mathcal F$-measurable random variable, and let $\xi$ be a random
sign with $\E[\xi\mid\mathcal F]=0$. Then $\E[\xi V]=0$.
Also, let $\omega$ range over a finite family of deterministic
input sequences, sampled independently of a fixed controller's
randomization. If
$\E_\omega\E_{\mathcal A}[\Reg_T(\mathcal C;\omega)]\ge r$,
then some fixed sequence $\omega$ satisfies
$\E_{\mathcal A}[\Reg_T(\mathcal C;\omega)]\ge r$.
\end{lemma}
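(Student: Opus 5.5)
The lemma has two independent parts, and I will prove them in order using only basic properties of conditional expectation.

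\emph{First part.} The plan is to use the tower property. Since $V$ is integrable and $\mathcal F$-measurable, and $|\xi|=1$, the product $\xi V$ is integrable. Conditioning on $\mathcal F$ and pulling out the known factor gives
\[
\E[\xi V]=\E\bigl[\E[\xi V\mid\mathcal F]\bigr]=\E\bigl[V\,\E[\xi\mid\mathcal F]\bigr]=0.
\]
The one point to state explicitly is that pulling out $V$ is legitimate. This holds because $\xi V$ is integrable, so the standard ``taking out what is known'' property applies without any boundedness assumption on $V$.

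\emph{Second part.} This is an averaging argument over a finite family. Write $R(\omega)=\E_{\mathcal A}[\Reg_T(\mathcal C;\omega)]$ for each fixed deterministic sequence $\omega$. Because $\omega$ is sampled independently of the controller's internal randomness, Fubini's theorem (or simply a finite sum over $\omega$) gives that the joint expectation equals $\sum_\omega P(\omega)R(\omega)$. I need each $R(\omega)$ to be finite so the sum makes sense. Finiteness follows from the bounded-domain facts in Section~\ref{sec:model}: states lie in $[0,\tau]$, and the Lipschitz bound \eqref{eq:lipschitz} makes each $c_t$ bounded on $\mathcal D_\alpha$. Hence both $J_T^{\mathrm{on}}$ and the infimum over $\mathcal C$ are bounded by constants depending on the fixed sequence, so regret is bounded.

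Now suppose every $\omega$ with $P(\omega)>0$ had $R(\omega)<r$. Then the weighted average would satisfy $\sum_\omega P(\omega)R(\omega)<r$, contradicting the hypothesis. Therefore some $\omega$ in the support has $R(\omega)\ge r$.

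The main subtlety is measurability: the controller's conditional law of actions given $\omega$ must coincide with its behavior on the deterministic instance $\omega$. This holds because the controller is causal and its randomness is independent of $\omega$. The expectation over its randomness on a fixed instance is therefore exactly the conditional expectation given $\omega$, which is what justifies splitting the joint expectation as above.
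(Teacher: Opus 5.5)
Your proposal is correct and follows the paper's proof: the first part is the same tower-property computation, taking $V$ out of the conditional expectation. The second part is the same observation that a weighted average of the finitely many values $R(\omega)$ cannot exceed their maximum, and your contradiction form is equivalent. Your extra justifications are sound additions that the paper leaves implicit: integrability of $\xi V$ from $|\xi|=1$, finiteness of each $R(\omega)$ from boundedness of $c_t$ on $\mathcal D_\alpha$, and identifying the conditional behavior given $\omega$ with the fixed-instance behavior (this last point rests on independence of the randomization, not on causality).
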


\begin{proof}
Conditional expectation gives
\[
\E[\xi V]
=\E\bigl[\E[\xi V\mid\mathcal F]\bigr]
=\E\bigl[V\E[\xi\mid\mathcal F]\bigr]=0.
\]
For the second claim, the average of the finitely many values
$\E_{\mathcal A}[\Reg_T(\mathcal C;\omega)]$ cannot exceed their
maximum. A maximizing sequence is fixed before play, although
its selection may depend on the controller.
\end{proof}

In the randomized constructions below, expectations include both
the sampled signs and the controller's randomness until a sign
sequence is fixed. The signs are independent of that randomness.
All variables to which the first part of Lemma~\ref{lem:averaging}
is applied are bounded, because $x_t\le\tau$ and the horizon is finite.
The lemma converts each averaged lower bound into a deterministic
oblivious instance.

\subsection{A block construction}

Independent random signs held fixed within blocks expose the
cost of memory; see, for example, the lower bounds of
Kumar et al.~\cite{kumar2023unbounded}.
Our construction uses recharge, clearing, holding, and reward phases
to realize a retention-dependent separation under storage feasibility.
It requires no state resets and uses the two endpoint policies
in \eqref{eq:endpoints} as comparators.

\begin{theorem}[Block lower bound]
\label{thm:block}
Let $b\ge1$ be an integer and $T\ge3b+1$. Define
\[
S_b=\frac{1-\alpha^b}{1-\alpha},
\qquad
B=3b+1,
\qquad
N=\left\lfloor\frac TB\right\rfloor.
\]
Then
\begin{equation}
\label{eq:block-bound}
\mathfrak R_T(\mathcal C)
\ge
\frac{L_x\alpha^{b+1}S_b^2}{4\sqrt2}\sqrt N.
\end{equation}
The lower bound holds with binary arrivals and affine costs
depending only on the state.
\end{theorem}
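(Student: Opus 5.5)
The plan is a randomized oblivious construction made of $N$ disjoint blocks of length $B=3b+1$, with a trailing idle segment of fewer than $B$ rounds where $w_t=0$ and $c_t\equiv0$. Each block $k$ gets an independent uniform sign $\xi_k$. After averaging over the signs and the controller's randomness, the second part of Lemma~\ref{lem:averaging} turns the averaged bound into a fixed deterministic instance.

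Inside block $k$ I would use four phases.
\begin{itemize}
\item Recharge: $b$ rounds with $w_t=1$ and zero cost.
\item Clearing: one round with $w_t=0$ and zero cost.
\item Holding: $b$ rounds with $w_t=0$.
\item Reward: $b$ rounds with costs $c_t(x,u)=\xi_k\sigma_t L_x x$, affine in the state only, so \eqref{eq:lipschitz} and \eqref{eq:oracle} hold trivially.
\end{itemize}
Only the two endpoint policies \eqref{eq:endpoints} are used as comparators. Under $y^{(1)}$ we have $x_t(y^{(1)})=w_{t-1}$, so the state is $0$ throughout every holding and reward phase and that comparator pays $0$. Under $y^{(0)}$ the state at the end of the recharge is at least $S_b$, including any nonnegative leftover from earlier blocks. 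By reward round $j$ it has decayed to at least $\alpha^{b+j}S_b$, so the block's reward cost is $\xi_k A_k$ with a deterministic $A_k\ge L_x\alpha^{b+1}S_b\cdot(\text{a further factor of order }S_b)$. The aim is $A_k\ge L_x\alpha^{b+1}S_b^2/2$.

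The comparator side then reduces to Khintchine. We have $\inf_{\mathcal C}J_T\le\min\{0,\sum_k\xi_kA_k\}=-\tfrac12\bigl|\sum_k\xi_kA_k\bigr|+\tfrac12\sum_k\xi_kA_k$. The linear term has zero mean. Inequality \eqref{eq:khintchine} gives $\E\bigl|\sum_k\xi_kA_k\bigr|\ge\frac1{\sqrt2}\bigl(\sum_kA_k^2\bigr)^{1/2}\ge\frac1{\sqrt2}\sqrt N\min_kA_k$. Combined with $\min_kA_k\ge L_x\alpha^{b+1}S_b^2/2$, this yields exactly the constant $L_x\alpha^{b+1}S_b^2/(4\sqrt2)$ in \eqref{eq:block-bound}, provided the controller's expected cost is $\ge0$.

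The controller side is the main obstacle. We want $\E[\xi_k\,x_t]=0$ at every reward round via the first part of Lemma~\ref{lem:averaging}, so we need $x_t$ measurable with respect to information available before $\xi_k$ is revealed. Two facts cut against each other.
\begin{itemize}
\item The state $x_t$ is determined by actions through round $t-1$, so it is unrevealed-sign measurable only at the first reward round.
\item After the first reward cost reveals $\xi_k$, the controller can empty its stock in one step, since $u_t=\alpha x_t$ is feasible.
\end{itemize}
Unless handled, this adaptivity would let the controller beat $\min\{0,\xi_kA_k\}$.

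I would try two remedies, in this order.

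The first remedy is to concentrate the sign into the first reward round. That round is the only one where the controller cannot yet react. The cost there is $\xi_kL_x\lambda x$, with weight $\lambda$ of order $S_b$. The remaining $b-1$ reward rounds then carry cost $0$, or a sign-independent cost that both comparators and the controller pay equally. Lipschitz constant $\lambda L_x$ exceeds $L_x$, so this form is inadmissible as stated. I would therefore instead spread the $S_b$ factor over $b$ rounds of positive-only or arrival-forced structure.

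The second remedy keeps the sign fixed across the reward phase but makes the post-revelation adaptation worthless. It sets $w_t=1$ during the reward phase and charges $\xi_kL_x(x_t-\text{deterministic baseline})$. The controller's gain from dumping at round $t$ is then at most $L_x\alpha^{t-r}x_r$ summed over later rounds. I would check whether this gain is dominated by the matched decrease in the comparator difference.

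I expect the careful bookkeeping at this step — identifying exactly which rounds' states are sign-independent and making the constant come out as $\tfrac12\alpha^{b+1}S_b^2$ — to be where the real work lies. The remaining steps are routine: nonnegativity of the leftover stock from earlier blocks (which only increases $A_k$) and boundedness for Lemma~\ref{lem:averaging}, which follows from $x_t\le\tau$.
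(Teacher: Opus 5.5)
There is a genuine gap, and it sits exactly where you located the main obstacle. Your final inequality is conditional on ``the controller's expected cost is $\ge0$.'' With the costs you actually specify, this is false, so the argument does not go through.

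You leave the holding rounds uncharged and put a sign-symmetric cost $\xi_k\sigma_tL_xx$ on all $b$ reward rounds. Take $\sigma_t\equiv1$, the choice that gives your amplitude of order $L_x\alpha^{b+1}S_b^2$, and let $r$ be the first reward round of block $k$. Consider a controller that makes no withdrawals through round $r$; it learns $\xi_k$ from $c_r$.
\begin{itemize}
\item If $\xi_k=1$, it sets $u_{r+1}=\alpha x_{r+1}$ and pays only $(1+\alpha)L_xx_r$ in the reward phase.
\item If $\xi_k=-1$, it keeps its stock and pays $-L_xS_bx_r$.
\end{itemize}
Its expected block cost is $\frac{L_x}{2}x_r(1+\alpha-S_b)$, which is negative whenever $b\ge3$. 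Since $x_r\ge\alpha^{b+1}S_b$, $\E[J_T^{\mathrm{on}}]$ is negative and linear in $N$. The comparator term you extract from Khintchine is only of order $\sqrt N$, so the resulting bound is vacuous.

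Neither remedy repairs this. The first, as you note, violates \eqref{eq:lipschitz}. The second is only a suggestion to be checked. Moreover, $w_t=1$ in the reward phase gives $x_t(y^{(1)})=w_{t-1}=1$ there, so $y^{(1)}$ no longer pays zero.

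The missing idea is to make the sign-dependent cost one-signed, so that reacting to $\xi_k$ is useless. The compensating positive part then moves into a deterministic charge paid before the sign is revealed. The paper uses holding cost $\frac{L_x}{2}\alpha^bx$ and reward cost $-\frac{L_x}{2}(1-\xi_k)x$. The reward cost is $0$ when $\xi_k=1$ and $-L_xx$ when $\xi_k=-1$, and all slopes lie in $[-L_x,L_x]$.

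No arrivals occur from the clearing round through the reward phase, so every feasible trajectory has $0\le x_{t+1}\le\alpha x_t$ there. Hence $\alpha^bx_{r-b+j}\ge\alpha^jx_r\ge x_{r+j}$ for $0\le j<b$. The factor $\alpha^b$ is calibrated so that withdrawing during holding to avoid the charge forfeits at least as much later. Summing, with $1-\xi_k\ge0$, gives the pathwise bound $C_k\ge\xi_k\frac{L_x}{2}S_bx_r$ on the block cost. Because $x_r$ is fixed before $c_r$ is revealed, the first part of Lemma~\ref{lem:averaging} yields $\E[J_T^{\mathrm{on}}]\ge0$ for every causal controller.

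The no-withdrawal policy attains equality, with amplitude $\frac{L_x}{2}S_bx_r(y^{(0)})\ge\frac{L_x}{2}\alpha^{b+1}S_b^2$. The policy $y^{(1)}$ still pays zero, and your Khintchine step then gives exactly \eqref{eq:block-bound}. Your phase structure, comparators, and averaging bookkeeping match the paper; this cost design is what is missing.
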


\begin{proof}
Fix a causal feasible controller. Draw independent uniform
signs $\xi_1,\ldots,\xi_N\in\{-1,1\}$, independently of its
randomization. Partition the first $NB$ rounds into blocks.
Block $n$ consists of the following successive phases:
\[
\begin{array}{c|c|c|c}
\text{Phase}&\text{Rounds}&w_t&c_t(x,u)\\ \hline
\text{Recharge}&b&1&0\\
\text{Clearing}&1&0&0\\
\text{Holding}&b&0&(L_x/2)\alpha^b x\\
\text{Reward}&b&0&-(L_x/2)(1-\xi_n)x
\end{array}
\]
All remaining rounds have zero arrival and zero cost.
Every realization is admissible: state slopes lie in
$[-L_x,L_x]$, and action slopes are zero.

The policy $y^{(1)}$ satisfies $x_t(y^{(1)})=w_{t-1}$.
Its state is therefore zero throughout every holding and reward
phase, giving
\begin{equation}
\label{eq:one-cost}
J_T(y^{(1)})=0.
\end{equation}
No reset is imposed on the online state or on the state of the
no-withdrawal comparator.

Let
\[
\sigma_n=(n-1)B+2b+2
\]
be the first reward round of block $n$.
For any feasible trajectory, denote its block cost by $C_n$ and set
\[
V_n=\frac{L_x}{2}S_bx_{\sigma_n}.
\]
There are no arrivals during the holding and reward phases, so
$0\le x_{t+1}\le\alpha x_t$. For $j=0,\ldots,b-1$,
\begin{equation}
\label{eq:block-stock}
\alpha^b x_{\sigma_n-b+j}
\ge\alpha^j x_{\sigma_n}
\ge x_{\sigma_n+j}.
\end{equation}
Since $1-\xi_n\ge0$,
\begin{align}
\label{eq:block-pathwise}
C_n
&=
\frac{L_x}{2}\sum_{j=0}^{b-1}
\left(
\alpha^b x_{\sigma_n-b+j}
-(1-\xi_n)x_{\sigma_n+j}
\right)\nonumber\\
&\ge
\frac{L_x}{2}\sum_{j=0}^{b-1}
\left(
\alpha^j x_{\sigma_n}
-(1-\xi_n)\alpha^j x_{\sigma_n}
\right)
=\xi_nV_n.
\end{align}
Equality holds when no withdrawals are made in these two phases.

The online value $V_n$ is determined before the first reward
cost reveals $\xi_n$. Conditional on the controller's random
seed and all information available before that revelation,
$\xi_n$ has mean zero. Consequently,
\begin{equation}
\label{eq:online-nonnegative}
\E[J_T^{\mathrm{on}}]
\ge\sum_{n=1}^N\E[\xi_nV_n]=0.
\end{equation}

For the no-withdrawal policy $y^{(0)}$, define
\[
\overline V_n
=
\frac{L_x}{2}S_bx_{\sigma_n}(y^{(0)}).
\]
These coefficients are deterministic because arrivals are
deterministic and the policy ignores costs.
If $s_n=(n-1)B+1$ is the first recharge round, then
\[
x_{s_n+b}(y^{(0)})
=
\alpha^b x_{s_n}(y^{(0)})+S_b
\ge S_b.
\]
The clearing and holding phases have $b+1$ rounds with no
arrivals or withdrawals. Therefore,
\begin{equation}
\label{eq:comparator-amplitude}
x_{\sigma_n}(y^{(0)})\ge\alpha^{b+1}S_b,
\qquad
\overline V_n\ge\frac{L_x}{2}\alpha^{b+1}S_b^2.
\end{equation}
Equality in \eqref{eq:block-pathwise} gives
\[
J_T(y^{(0)})=\sum_{n=1}^N\overline V_n\xi_n,
\qquad
\E[J_T(y^{(0)})]=0.
\]

Using \eqref{eq:one-cost}, \eqref{eq:online-nonnegative},
and $-\min\{0,z\}=(|z|-z)/2$, we obtain
\begin{align*}
\E[\Reg_T(\mathcal C)]
&\ge
\E\left[
J_T^{\mathrm{on}}-\min\{0,J_T(y^{(0)})\}
\right]\\
&\ge
\frac12\E\left|\sum_{n=1}^N\overline V_n\xi_n\right|\\
&\ge
\frac1{2\sqrt2}
\left(\sum_{n=1}^N\overline V_n^2\right)^{1/2}\\
&\ge
\frac{L_x\alpha^{b+1}S_b^2}{4\sqrt2}\sqrt N.
\end{align*}
The penultimate inequality is \eqref{eq:khintchine}.
Fixing a sign sequence that attains at least this average
produces a deterministic oblivious instance.
Since the controller was arbitrary, the minimax bound follows.
\end{proof}

\begin{corollary}[Finite-horizon lower bound]
\label{cor:finite-lower}
Let $\alpha\in[1/2,1)$, $T\ge4$, and
$M=\min\{T,\tau\}$. Then
\begin{equation}
\label{eq:finite-lower}
\mathfrak R_T(\mathcal C)
\ge
\frac{L_x}{4096\sqrt2}\sqrt T\,M^{3/2}.
\end{equation}
\end{corollary}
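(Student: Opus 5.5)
The plan is to apply Theorem~\ref{thm:block} with a block length $b$ proportional to $M=\min\{T,\tau\}$, and then check that each factor in \eqref{eq:block-bound} is bounded below by an absolute constant times the corresponding power of $M$. Concretely, I would take
\[
b=\max\{1,\lfloor M/8\rfloor\}.
\]
Then $B=3b+1\le T$ holds once $M\le T$ and $T\ge4$. For $b=1$ this needs $B=4\le T$, which is given. For $b=\lfloor M/8\rfloor\ge1$ we have $3b+1\le 3M/8+1\le T$. So the hypothesis $T\ge3b+1$ of Theorem~\ref{thm:block} is satisfied.

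Next I would bound the three factors separately.

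\emph{Decay factor.} Since $b+1\le M/8+1$ and $\tau\ge M$,
\[
\alpha^{b+1}\ge\alpha^{\tau/8+1}.
\]
For $\alpha\in[1/2,1)$ we have $\alpha^{\tau}=\alpha^{1/(1-\alpha)}\ge 1/4$, because $\alpha^{1/(1-\alpha)}$ is increasing in $\alpha$ and equals $1/4$ at $\alpha=1/2$. Together with $\alpha\ge1/2$, this gives $\alpha^{b+1}\ge 4^{-1/8}\cdot\tfrac12\ge 1/4$.

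\emph{Storage factor.} We have $S_b=\sum_{j<b}\alpha^j\ge b\alpha^{b-1}\ge b/4$ by the same estimate. In the case $b=\lfloor M/8\rfloor$ this gives $b\ge M/16$, since $\lfloor x\rfloor\ge x/2$ for $x\ge1$. In the case $b=1$ we have $M<16$ and $S_1=1\ge M/16$. Either way, $S_b\ge M/64$, possibly after adjusting constants.

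\emph{Number of blocks.} We have $N=\lfloor T/B\rfloor\ge T/(2B)$ because $B\le T$. Also $B\le 3b+1\le 4b\le M/2$ when $b\ge1$ is the floor case, and $B=4$ otherwise. This gives $N\ge cT/M$, hence $\sqrt N\ge c'\sqrt{T/M}$.

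Multiplying the three bounds produces $L_x M^2\sqrt{T/M}=L_x\sqrt T M^{3/2}$ times an absolute constant.

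The main obstacle is bookkeeping the constants so that they come out at least $1/(4096\sqrt2)$ uniformly in both regimes, $M=T$ and $M=\tau$, including the small case where $b=1$. The quantity that needs care is the combined constant $\alpha^{b+1}S_b^2/(4\sqrt2)\cdot\sqrt N$. I would first try $b=\lceil M/8\rceil$, or $\lfloor M/4\rfloor$, and track the bounds $\alpha^{M}\ge1/4$, $S_b\ge b/4$ and $N\ge T/(2B)$. The target is $(1/4)(M/32)^2\sqrt{T/(2M)}/(4\sqrt2)$ or better. Since $4096=2^{12}$ leaves generous slack, I would tune the divisor in $b$, then handle $b=1$ (that is, $M<16$) separately. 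In that case $S_1=1$, $\alpha^2\ge1/4$, and $N\ge T/8$, which directly gives $\gtrsim L_x\sqrt T\ge L_x\sqrt T\,M^{3/2}/64$.
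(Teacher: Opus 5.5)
\textbf{Gap: the stated constant is never reached.} Your route is the paper's: apply Theorem~\ref{thm:block} with $b$ of order $M$, then bound $\alpha^{b+1}$, $S_b$ and $N$. That gives $\mathfrak R_T(\mathcal C)\ge c\,L_x\sqrt T\,M^{3/2}$ for some absolute $c$. The corollary, however, asserts the specific constant $1/(4096\sqrt2)$, which Theorem~\ref{thm:minimax} reuses, and your proposal does not establish it.

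In the main case $b=\lfloor M/8\rfloor\ge1$, the bounds you actually derive are $\alpha^{b+1}\ge1/4$, $S_b\ge M/64$ and $N\ge T/M$. They give only
\[
\frac{L_x}{4\sqrt2}\cdot\frac14\cdot\frac{M^2}{4096}\sqrt{\frac TM}
=\frac{L_x}{65536\sqrt2}\sqrt T\,M^{3/2},
\]
which is sixteen times too small. Even your stated target, $(1/4)(M/32)^2\sqrt{T/(2M)}/(4\sqrt2)=\sqrt T\,M^{3/2}/32768$, falls short of $1/(4096\sqrt2)$ by a factor $4\sqrt2$. Deferring the constant to later tuning leaves the displayed inequality unproved.

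\textbf{Where the loss is and how to fix it.} The loss is in $S_b\ge b\alpha^{b-1}\ge b/4$: the estimate $\alpha^\tau\ge1/4$ is far too weak when $b$ is a small fraction of $\tau$. The missing ingredient is Bernoulli's inequality, $\alpha^b=(1-1/\tau)^b\ge1-b/\tau$. With your choice, $b\le M/8\le\tau/8$, so $\alpha^b\ge7/8$ and $\alpha^{b+1}\ge7/16$. Also $S_b\ge b\alpha^b\ge7b/8\ge7M/128$. Together with $N\ge T/M$ this gives
\[
\frac{L_x}{4\sqrt2}\cdot\frac7{16}\cdot\frac{49M^2}{16384}\sqrt{\frac TM}
=\frac{343}{1048576\sqrt2}L_x\sqrt T\,M^{3/2}
\ge\frac{L_x}{4096\sqrt2}\sqrt T\,M^{3/2},
\]
because $1048576/4096=256<343$.

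\textbf{The $b=1$ case.} Your separate case $b=1$, $M<16$, is fine: it gives $L_x\sqrt T/64\ge L_x\sqrt T\,M^{3/2}/4096$. Your decay-factor step, however, claims $\alpha^{b+1}\ge\alpha^{\tau/8+1}$ for all $b$. This is false when $b=1$ and $\tau<8$, since then $b+1=2>\tau/8+1$. In that case use $\alpha^2\ge1/4$ directly.

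\textbf{The paper's choice.} The paper avoids the case split by taking $b=\min\{\lfloor T/4\rfloor,\lfloor\tau/2\rfloor\}$. This gives $b\ge1$, $b\ge M/8$ and $B\le4b\le T$. Bernoulli then gives $\alpha^b\ge1/2$, so $S_b\ge b/2$ and $N\ge T/(8b)$. The result is the bound $\frac{L_x}{256}\sqrt T\,b^{3/2}$, which with $b\ge M/8$ yields the stated constant.
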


\begin{proof}
Choose
\[
b=
\min\{\lfloor T/4\rfloor,\lfloor\tau/2\rfloor\}.
\]
Since $T\ge4$ and $\tau\ge2$,
\[
b\ge1,\qquad b\ge M/8,\qquad B=3b+1\le4b\le T.
\]
Bernoulli's inequality gives
\[
\alpha^b=(1-1/\tau)^b\ge1-b/\tau\ge\frac12.
\]
Hence
\[
\alpha^{b+1}\ge\frac14,
\qquad
S_b\ge b\alpha^b\ge\frac b2,
\qquad
N=\left\lfloor\frac TB\right\rfloor
\ge\frac{T}{2B}\ge\frac{T}{8b}.
\]
Substitution into \eqref{eq:block-bound} yields
\[
\mathfrak R_T(\mathcal C)
\ge
\frac{L_x}{4\sqrt2}\frac14\frac{b^2}{4}
\sqrt{\frac{T}{8b}}
=
\frac{L_x}{256}\sqrt T\,b^{3/2}
\ge
\frac{L_x}{4096\sqrt2}\sqrt T\,M^{3/2}.
\]
\end{proof}

\begin{corollary}[Beyond the retention time]
\label{cor:retention}
If $\alpha\in[1/2,1)$ and $T\ge10\tau$, then
\begin{equation}
\label{eq:retention-lower}
\mathfrak R_T(\mathcal C)
\ge
\frac{L_x}{128\sqrt{20}}\tau^{3/2}\sqrt T.
\end{equation}
\end{corollary}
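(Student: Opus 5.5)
We apply Theorem~\ref{thm:block} directly, taking the block length at the retention scale. Set $b=\lfloor\tau/2\rfloor$. Since $\alpha\ge1/2$ we have $\tau\ge2$, so $b\ge1$. Also $b\ge\tau/2-1/2\ge\tau/4$ once $\tau\ge2$; the simpler estimate $\lfloor x\rfloor\ge x/2$ for $x\ge1$ gives $b\ge\tau/4$ in the same way. The block length is $B=3b+1\le4b\le2\tau$, and the hypothesis $T\ge10\tau$ gives $T\ge B$. Hence the condition $T\ge3b+1$ of Theorem~\ref{thm:block} holds, with considerable room to spare.

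The next step is to bound the three factors in \eqref{eq:block-bound}.

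\begin{itemize}
\item \emph{Decay factor.} Bernoulli's inequality, used as in the proof of Corollary~\ref{cor:finite-lower}, gives $\alpha^b\ge1-b/\tau\ge1/2$. Then $\alpha^{b+1}\ge1/4$, because $\alpha\ge1/2$.
\item \emph{Stock factor.} We have $S_b\ge b\alpha^b\ge b/2\ge\tau/8$.
\item \emph{Number of blocks.} Since $B\le2\tau$ and $T/B\ge5$, we get $N=\lfloor T/B\rfloor\ge T/B-1\ge\frac45\cdot\frac{T}{B}\ge\frac{2T}{5\tau}$. Writing this as $N\ge T/(2.5\tau)$ produces the $\sqrt{20}$ that appears in the target constant.
\end{itemize}

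Substituting these bounds into \eqref{eq:block-bound} gives
\[
\mathfrak R_T(\mathcal C)\ge\frac{L_x}{4\sqrt2}\cdot\frac14\cdot\frac{\tau^2}{64}\sqrt{\frac{2T}{5\tau}}.
\]
This is of the form $c\,L_x\tau^{3/2}\sqrt T$ with an explicit $c$. The remaining work is to check that $c\ge1/(128\sqrt{20})$.

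The only real obstacle is matching the stated constant. My first pass may come out slightly weaker, for example $1/(1024\sqrt{20})$. If so, I will tighten the estimates rather than change the construction:

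\begin{itemize}
\item use $S_b=(1-\alpha^b)\tau$ directly, which gives $S_b\ge\tau(1-e^{-b/\tau})$ with $b/\tau$ close to $1/2$;
\item use $b\ge(\tau-1)/2$ for integer-valued floors;
\item use the sharper bound $\alpha^{b+1}\ge(1-1/\tau)^{\tau/2+1}$.
\end{itemize}

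The claimed constant is small enough that these refinements, possibly with a choice of $b$ closer to $\tau$ such as $b=\lfloor\tau\rfloor$, should suffice. A choice near $\tau$ gives $S_b\approx\tau(1-e^{-1})$ and $\alpha^{b+1}\gtrsim e^{-1}/2$, which trades a constant in the decay factor against $S_b^2$. I will fix $b$ only after comparing these products numerically.

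All remaining steps are routine substitutions. The result holds for every $\mathcal C$ because Theorem~\ref{thm:block} does.
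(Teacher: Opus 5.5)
Your overall route, invoking Theorem~\ref{thm:block} with $b$ at the retention scale, is the paper's. The problem is that the corollary asserts an explicit constant, and your argument does not reach it. Substituting your three estimates gives $\frac{L_x}{4\sqrt2}\cdot\frac14\cdot\frac{\tau^2}{64}\sqrt{2T/(5\tau)}=\frac{L_x}{1024\sqrt5}\tau^{3/2}\sqrt T$. The target is $\frac{L_x}{128\sqrt{20}}=\frac{L_x}{256\sqrt5}$, so you are short by a factor $4$. Also, the $\sqrt{20}$ in the target is $\sqrt2\cdot\sqrt{10}$, coming from $N>T/(10\tau)$, not from your bound on $N$. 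The loss sits in $S_b\ge b\alpha^b\ge\tau/8$, which costs a factor $16$ after squaring.

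The repairs you list are neither carried out nor valid as stated:
\begin{itemize}
\item The bound $\lfloor\tau/2\rfloor\ge(\tau-1)/2$, like your opening claim $b\ge\tau/2-1/2$, fails because $\tau$ is not an integer in general; for $\tau$ just below $4$ you have $b=1$.
\item The ratio $b/\tau$ is not close to $1/2$ for small $\tau$; it tends to $1/4$ as $\tau\uparrow4$.
\item Combining the separate worst-case bounds $S_b\ge\tau(1-e^{-1/4})$ and $\alpha^{b+1}\ge1/4$ gives $\alpha^{b+1}(S_b/\tau)^2\approx0.0122$. Your bound $N\ge2T/(5\tau)$ requires $1/64$.
\end{itemize}
Deferring the choice of $b$ to a later numerical comparison leaves the proof unfinished.

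The paper avoids this by taking $b$ to be the half-life, $b=\lceil\log2/\log(1/\alpha)\rceil$, so that $\alpha/2<\alpha^b\le1/2$. The exact identity $S_b=\tau(1-\alpha^b)$ then gives $S_b\ge\tau/2$ with no loss, and $\alpha^{b+1}>\alpha^2/2\ge1/8$. From $\log(1/\alpha)\ge1/\tau$ one gets $B<3\tau\log2+4<5\tau$, hence $N\ge T/(2B)>T/(10\tau)$. The constant $\frac{1}{4\sqrt2}\cdot\frac18\cdot\frac14\cdot\frac{1}{\sqrt{10}}=\frac{1}{128\sqrt{20}}$ then drops out exactly.

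Your choice $b=\lfloor\tau/2\rfloor$ can be salvaged, but only with a case split rather than separate worst-case factor bounds. With $N\ge2T/(5\tau)$ you need $\alpha^{b+1}S_b^2\ge\tau^2/64$.
\begin{itemize}
\item For $2\le\tau<4$: $b=1$ and $S_1=1$, so the requirement reduces to $1-1/\tau\ge\tau/8$, which holds on this range.
\item For $\tau\ge4$: $b/\tau\ge1/3$, so $S_b\ge\tau(1-e^{-1/3})$. Also $\alpha^{b+1}\ge\frac34\cdot\frac12$ by Bernoulli. The product of this with $(1-e^{-1/3})^2\approx0.080$ exceeds $1/64$.
\end{itemize}
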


\begin{proof}
Set
\[
q=\frac{\log2}{\log(1/\alpha)},
\qquad
b=\lceil q\rceil.
\]
Since $q\le b<q+1$ and $\alpha^q=1/2$,
\[
\frac\alpha2<\alpha^b\le\frac12,
\qquad
S_b\ge\frac\tau2,
\qquad
\alpha^{b+1}>\frac{\alpha^2}{2}\ge\frac18.
\]
Using $-\log\alpha\ge1-\alpha=1/\tau$,
\[
B=3b+1<3\tau\log2+4<5\tau.
\]
Since $T\ge10\tau>2B$,
\[
N=\left\lfloor\frac TB\right\rfloor
\ge\frac{T}{2B}>\frac{T}{10\tau}.
\]
Theorem~\ref{thm:block} gives
\[
\mathfrak R_T(\mathcal C)
\ge
\frac{L_x}{4\sqrt2}\frac18
\left(\frac\tau2\right)^2
\sqrt{\frac{T}{10\tau}}
=
\frac{L_x}{128\sqrt{20}}\tau^{3/2}\sqrt T.
\]
\end{proof}

\subsection{Action costs and small horizons}

For completeness, we also record the independent-sign lower bound
from \cite{asgari2026sdac}, expressed in cumulative coordinates.
It includes the action-cost constant $L_u$ and holds for every
$\alpha\in(0,1)$ and $T\ge2$.

\begin{theorem}[Independent-sign lower bound]
\label{thm:sign}
For $T\ge2$,
\begin{equation}
\label{eq:sign-bound}
\mathfrak R_T(\mathcal C)
\ge
\frac1{2\sqrt2}
\left(\sum_{t=2}^T\beta_t^2\right)^{1/2},
\qquad
\beta_t=
\alpha\left(
L_u+\frac{L_x(1-\alpha^{t-2})}{1-\alpha}
\right).
\end{equation}
\end{theorem}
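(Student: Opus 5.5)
The plan is to use an independent-sign construction with constant arrivals. The comparators are the two endpoint policies $y^{(0)}$ and $y^{(1)}$ of \eqref{eq:endpoints}, which lie in every class $\mathcal C$ under consideration. Fix an arbitrary causal feasible controller. Draw independent uniform signs $\xi_1,\ldots,\xi_T\in\{-1,1\}$, independently of its randomization. Set $w_t=1$ for all $t$ and use the affine costs
\[
c_t(x,u)=\xi_t\bigl(L_xx-L_uu\bigr).
\]
These costs are convex, have slopes bounded by $L_x$ and $L_u$, and are therefore admissible. Binary arrivals are admissible as well.

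First I compute the comparator costs. By \eqref{eq:endpoints}, the no-withdrawal policy has $u_t=0$ and $x_t(y^{(0)})=\sum_{j=1}^{t-1}\alpha^{j-1}=(1-\alpha^{t-1})/(1-\alpha)$, so its round-$t$ cost is $\xi_tL_x(1-\alpha^{t-1})/(1-\alpha)$. The policy $y^{(1)}$ has $x_t=1$ and $u_t=\alpha$ for $t\ge2$, so its round-$t$ cost is $\xi_t(L_x-\alpha L_u)$. Both policies have zero state and action at $t=1$. Using $(1-\alpha^{t-1})/(1-\alpha)=1+\alpha(1-\alpha^{t-2})/(1-\alpha)$, the per-round difference is
\[
\xi_t\Bigl(\alpha L_x\tfrac{1-\alpha^{t-2}}{1-\alpha}+\alpha L_u\Bigr)=\xi_t\beta_t,
\qquad t\ge2.
\]
Hence $J_T(y^{(0)})-J_T(y^{(1)})=\sum_{t=2}^T\beta_t\xi_t$. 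Both $J_T(y^{(0)})$ and $J_T(y^{(1)})$ are deterministic linear combinations of the signs, so each has mean zero.

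Next I show the online cost has mean zero. The online round-$t$ cost is $\xi_tV_t$ with $V_t=L_xx_t-L_uu_t$. The variable $V_t$ is determined by the controller's seed and the information revealed before $c_t$, and it is bounded because $x_t\le\tau$. Conditional on that information, $\xi_t$ has mean zero. The first part of Lemma~\ref{lem:averaging} then gives $\E[J_T^{\mathrm{on}}]=0$.

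Finally I combine the pieces. Since both endpoint policies belong to $\mathcal C$,
\[
\E[\Reg_T(\mathcal C)]\ge\E\bigl[J_T^{\mathrm{on}}-\min\{J_T(y^{(0)}),J_T(y^{(1)})\}\bigr].
\]
Applying $-\min\{p,q\}=(|p-q|-p-q)/2$ and the zero means above gives $\E[\Reg_T(\mathcal C)]\ge\tfrac12\E\bigl|\sum_{t=2}^T\beta_t\xi_t\bigr|$. The sharp Khintchine inequality \eqref{eq:khintchine} bounds this below by $\frac{1}{2\sqrt2}(\sum_{t=2}^T\beta_t^2)^{1/2}$. The second part of Lemma~\ref{lem:averaging} then fixes a deterministic sign sequence attaining this average. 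Because the controller was arbitrary, this gives the bound on $\mathfrak R_T(\mathcal C)$.

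The only delicate step is choosing the sign convention $L_xx-L_uu$, so that the state and action contributions add rather than cancel in the comparator difference. The rest is bookkeeping.
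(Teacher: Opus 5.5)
Your proposal is correct and matches the paper's proof step for step. It uses the same constant arrivals and affine costs $\xi_t(L_xx-L_uu)$, the two endpoint comparators, the zero-mean online cost from the unrevealed current sign, the $\min$ identity, and the sharp Khintchine inequality, and your identity $(1-\alpha^{t-1})/(1-\alpha)=1+\alpha(1-\alpha^{t-2})/(1-\alpha)$ just makes explicit the per-round difference $\xi_t\beta_t$ that the paper states without computation.
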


\begin{proof}
Fix a causal feasible controller.
Draw independent uniform signs
$\xi_1,\ldots,\xi_T\in\{-1,1\}$ and set
\[
w_t=1,
\qquad
c_t(x,u)=\xi_t(L_xx-L_uu).
\]
Every realization is admissible.
The current sign is independent of the state and action chosen
before its revelation, so
\[
\E[J_T^{\mathrm{on}}]=0.
\]

For the two endpoint policies and $t\ge2$,
\[
\begin{aligned}
x_t(y^{(0)})&=\frac{1-\alpha^{t-1}}{1-\alpha},
&
u_t(y^{(0)})&=0,\\
x_t(y^{(1)})&=1,
&
u_t(y^{(1)})&=\alpha.
\end{aligned}
\]
Their expected cumulative costs are zero, and
\[
J_T(y^{(0)})-J_T(y^{(1)})
=
\sum_{t=2}^T\beta_t\xi_t.
\]
Using $\min\{a,b\}=(a+b-|a-b|)/2$ gives
\begin{align*}
\E[\Reg_T(\mathcal C)]
&\ge
\E\left[
J_T^{\mathrm{on}}
-\min\{J_T(y^{(0)}),J_T(y^{(1)})\}
\right]\\
&=
\frac12\E\left|\sum_{t=2}^T\beta_t\xi_t\right|\\
&\ge
\frac1{2\sqrt2}
\left(\sum_{t=2}^T\beta_t^2\right)^{1/2}.
\end{align*}
The last step uses \eqref{eq:khintchine}.
Fixing a sign sequence that attains at least this average
proves the result.
\end{proof}

\begin{corollary}[Long-horizon independent-sign bound]
\label{cor:sign-long}
Define
\[
t_{\mathrm{mix}}
=
\left\lceil\frac{\log2}{\log(1/\alpha)}\right\rceil,
\qquad
G_\alpha=\alpha(L_u+L_x\tau).
\]
If $T\ge2t_{\mathrm{mix}}+2$, then
\[
\mathfrak R_T(\mathcal C)\ge\frac{G_\alpha}{8}\sqrt T.
\]
\end{corollary}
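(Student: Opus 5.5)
We derive the corollary from Theorem~\ref{thm:sign} by discarding the early terms of the sum and bounding each remaining $\beta_t$ below by a constant fraction of $G_\alpha$.

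\emph{Lower bound on each late term.} For $t\ge t_{\mathrm{mix}}+2$ we have $t-2\ge t_{\mathrm{mix}}\ge q=\log2/\log(1/\alpha)$, so $\alpha^{t-2}\le\alpha^{q}=1/2$. Hence $1-\alpha^{t-2}\ge1/2$, and
\[
\beta_t=\alpha\left(L_u+L_x\tau(1-\alpha^{t-2})\right)\ge\alpha\left(L_u+\tfrac12L_x\tau\right)\ge\tfrac12G_\alpha .
\]

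\emph{Counting the terms.} The indices $t=t_{\mathrm{mix}}+2,\ldots,T$ number $T-t_{\mathrm{mix}}-1$. The hypothesis $T\ge2t_{\mathrm{mix}}+2$ gives $t_{\mathrm{mix}}+1\le T/2$, so at least $T/2$ indices remain. Therefore
\[
\sum_{t=2}^T\beta_t^2\ge\frac T2\cdot\frac{G_\alpha^2}{4}=\frac{G_\alpha^2T}{8}.
\]

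\emph{Substitution.} Inserting this into \eqref{eq:sign-bound} yields
\[
\mathfrak R_T(\mathcal C)\ge\frac1{2\sqrt2}\cdot\frac{G_\alpha\sqrt T}{2\sqrt2}=\frac{G_\alpha}{8}\sqrt T,
\]
which is the claimed bound.

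The only step that needs care is the count of surviving indices: it must reach $T/2$ exactly, and that is precisely what the hypothesis $T\ge2t_{\mathrm{mix}}+2$ supplies. The bound $\alpha^{t_{\mathrm{mix}}}\le1/2$ follows from the ceiling in the definition of $t_{\mathrm{mix}}$, together with $\alpha<1$ and the monotonicity of $\alpha^s$ in $s$.
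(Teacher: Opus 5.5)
Your proposal is correct and follows the paper's proof exactly: restrict to $t\ge t_{\mathrm{mix}}+2$, where $\alpha^{t-2}\le 1/2$ gives $\beta_t\ge G_\alpha/2$; use $T\ge 2t_{\mathrm{mix}}+2$ to count at least $T/2$ such indices; and substitute into Theorem~\ref{thm:sign}. Your explicit checks of the index count and of $\alpha^{q}=1/2$ fill in details that the paper leaves implicit.
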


\begin{proof}
For $t=t_{\mathrm{mix}}+2,\ldots,T$,
$\alpha^{t-2}\le1/2$ and hence $\beta_t\ge G_\alpha/2$.
There are at least $T/2$ such indices.
Theorem~\ref{thm:sign} therefore gives
\[
\mathfrak R_T(\mathcal C)
\ge
\frac1{2\sqrt2}
\sqrt{\frac T2\frac{G_\alpha^2}{4}}
=
\frac{G_\alpha}{8}\sqrt T.
\]
\end{proof}

For fixed positive cost constants, the independent-sign bound
has order $\tau\sqrt T$ as retention time grows and the horizon
exceeds the retention time.
The block construction strengthens this dependence to
$\tau^{3/2}\sqrt T$.

\subsection{Matching minimax rate}

\begin{theorem}[Matching horizon and retention dependence]
\label{thm:minimax}
Let $\alpha\in[1/2,1)$, $T\ge4$, and $M=\min\{T,\tau\}$.
For every finite-memory class $\CH$, $H\ge1$, and for $\CI$,
\begin{equation}
\label{eq:matching}
\frac{L_x}{4096\sqrt2}\sqrt T\,M^{3/2}
\le
\mathfrak R_T(\mathcal C)
\le
\frac{\sqrt3}{2}(L_x+L_u)\sqrt T\,M^{3/2}.
\end{equation}
Consequently, for fixed $L_x,L_u>0$,
\begin{equation}
\label{eq:rate}
\mathfrak R_T(\mathcal C)
=
\Theta\!\left(
\sqrt T\min\{T,(1-\alpha)^{-1}\}^{3/2}
\right)
=
\Theta\!\left(
\min\{T^2,(1-\alpha)^{-3/2}\sqrt T\}
\right).
\end{equation}
\end{theorem}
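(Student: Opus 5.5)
The lower bound in \eqref{eq:matching} is exactly Corollary~\ref{cor:finite-lower}, which applies to every class $\mathcal C$ considered in this section, so I will quote it directly. The work is in the upper bound. The plan is to exhibit one causal feasible controller whose regret against $\mathcal C$ is at most $\frac{\sqrt3}{2}(L_x+L_u)\sqrt T\,M^{3/2}$ on every admissible sequence. Since the infimum in \eqref{eq:minimax} is over controllers, this bounds $\mathfrak R_T(\mathcal C)$.

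For the upper bound I first reduce all benchmarks to a single finite class. As noted after \eqref{eq:regret}, only $y_1,\ldots,y_{T-1}$ affect costs through round $T$. Hence $\CI$ and $\mathcal C_{T-1}$ have identical comparator costs. For finite $H$, every $y\in\CH$ extends by $y_i=y_H$ for $i>H$ to a member of $\CI$ with the same trajectory. Therefore $\inf_{\CH}J_T\ge\inf_{\CI}J_T=\inf_{\mathcal C_{T-1}}J_T$, which gives $\Reg_T(\CH)\le\Reg_T(\CI)=\Reg_T(\mathcal C_{T-1})$ for the same controller. I would therefore run the controller \eqref{eq:controller-action}--\eqref{eq:update} with memory $T-1$ and step size \eqref{eq:eta}. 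Theorem~\ref{thm:upper}, applied with $H=T-1$ (allowed since $T\ge4$), gives
\[
\Reg_T(\mathcal C)\le\frac{\sqrt3}{2}\alpha(L_x+L_u)\tau_T^{3/2}\sqrt T
\]
simultaneously for all the classes. Equivalently, the case $H=T-1$ of Corollary~\ref{cor:infinite} gives $U_T$ with no truncation term.

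It remains to bound the retention factor. Using $\alpha<1$ and $\tau_T\le\min\{T-1,\tau\}\le M$, stated in Section~\ref{sec:model}, gives $\alpha\tau_T^{3/2}\le M^{3/2}$. This is the right side of \eqref{eq:matching}. The only point needing care is that the controller must not depend on $H$. This is why the reduction to memory $T-1$ comes first: one algorithm serves every $\CH$ and $\CI$ at once, and it is legitimate because the minimax quantity is defined class by class.

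For \eqref{eq:rate}, both constants in \eqref{eq:matching} are fixed multiples of $\sqrt T M^{3/2}$ once $L_x,L_u$ are fixed. I then verify the identity $\sqrt T\min\{T,\tau\}^{3/2}=\min\{T^2,\tau^{3/2}\sqrt T\}$ by cases. If $T\le\tau$, the expression equals $\sqrt T\,T^{3/2}=T^2$, and $T^2\le\tau^{3/2}\sqrt T$ holds. The case $T>\tau$ is symmetric. I expect no real obstacle here. The substantive content, namely the block lower bound and the memory-independent upper bound, is already established, and the main risk is only bookkeeping: making sure the hypotheses of Corollary~\ref{cor:finite-lower} ($\alpha\ge1/2$, $T\ge4$) match those of the theorem, and that $\tau_T\le M$ is used in place of $\tau$.
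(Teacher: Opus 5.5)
Your proposal is correct and follows the paper's proof. The lower bound is quoted from Corollary~\ref{cor:finite-lower}, and the upper bound comes from Theorem~\ref{thm:upper} together with $\alpha\le1$ and $\tau_T\le\min\{T-1,\tau\}\le M$. The only difference is cosmetic: the paper runs the controller with memory $\min\{H,T-1\}$ matched to each class, whereas you use memory $T-1$ for every class via the inclusion $\CH\subseteq\CI$. Your route is equally valid, but a class-independent controller is not actually required, since $\mathfrak R_T(\mathcal C)$ is defined class by class.
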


\begin{proof}
The lower bound is Corollary~\ref{cor:finite-lower}.
For the upper bound, set
\[
H_{\mathcal C}
=
\begin{cases}
\min\{H,T-1\},&\mathcal C=\CH,\\
T-1,&\mathcal C=\CI.
\end{cases}
\]
The comparator costs equal those of
$\mathcal C_{H_{\mathcal C}}$ through round $T$.
Running the controller with this memory and applying
Theorem~\ref{thm:upper} gives
\[
\mathfrak R_T(\mathcal C)
\le
\frac{\sqrt3}{2}\alpha(L_x+L_u)\tau_T^{3/2}\sqrt T
\le
\frac{\sqrt3}{2}(L_x+L_u)M^{3/2}\sqrt T,
\]
because $\alpha\le1$ and
$\tau_T\le\min\{T-1,\tau\}\le M$.
\end{proof}

The exact minimax upper bound for $\CI$ above uses memory $T-1$.
For fixed $\alpha$, Corollary~\ref{cor:infinite} provides the
more economical logarithmic-memory implementation with the
additional bounded truncation term in \eqref{eq:log-regret}.
The matching statement concerns dependence on $T$ and $\tau$
for fixed positive cost constants; it does not assert a sharp
joint dependence on independently varying $L_x$ and $L_u$.

\section{Conclusion}

Cumulative allocation fractions give an exact representation of
the existing SDAC class in which a decay-weighted projected update
controls optimization and tracking error uniformly in policy memory.
For fixed model parameters, this yields $O(\sqrt T)$ regret against
the best fixed infinite-memory SDAC policy with logarithmic memory
and arithmetic per round, excluding the cost-subgradient query.
The storage-specific block construction establishes the matching
rate $\Theta(\sqrt T\min\{T,\tau\}^{3/2})$ for
$\alpha\in[1/2,1)$, $T\ge4$, and fixed $L_x,L_u>0$.
The optimality statement concerns these policy benchmarks under
full cost feedback and without active capacity saturation; the
lower bound holds even when the controller is any causal feasible
algorithm.

\bibliographystyle{plain}
\bibliography{references}

\end{document}